\documentclass[11pt,letterpaper]{article}

\usepackage[letterpaper, left=1in, right=1in, top=1in, bottom=1in]{geometry}
\usepackage[american]{babel}
\usepackage[normalem]{ulem}
\usepackage{amsmath, amssymb, cases, amsthm}
\usepackage{thmtools}
\usepackage[shortlabels]{enumitem}
\usepackage{mdframed}
\usepackage{bbm}
\usepackage{bm}
\usepackage{microtype}
\usepackage{xcolor}
\usepackage{makecell}
\usepackage{mathtools}
\usepackage{algorithmic}
\usepackage[procnumbered,ruled,vlined,linesnumbered]{algorithm2e}
\usepackage{float}
\usepackage{varwidth}
\usepackage{modletter}
\usepackage{tcolorbox}
\usepackage{tikz}
\usepackage{cancel}

\usetikzlibrary{arrows.meta,positioning,fit,calc}

\newtcolorbox{construction}[2][]
{
	colframe = gray!50,
	colback  = gray!10,
	coltitle = gray!10!black,
	left*=0mm,
	before skip = 10pt,
	after skip = 10pt,
	title    = \textbf{\space\space #2},
	#1,
}

\SetKwInput{KwData}{Input}
\SetKwInput{KwResult}{Output}
\SetKwInput{KwGlobalVar}{Global variables}
\SetKw{Continue}{continue}
\SetKwComment{Comment}{$\triangleright$\ }{}

\usepackage{xcolor}
\definecolor{ForestGreen}{rgb}{0.1333,0.5451,0.1333}
\definecolor{DarkRed}{rgb}{0.80,0,0}
\definecolor{Red}{rgb}{1,0,0}
\usepackage[linktocpage=true,
	pagebackref=true,colorlinks,
	linkcolor=DarkRed,citecolor=ForestGreen,
	bookmarks,bookmarksopen,bookmarksnumbered]
{hyperref}

\usepackage[capitalize,nosort,nameinlink]{cleveref}

\declaretheorem[numberwithin=section]{theorem}
\declaretheorem[numberlike=theorem]{lemma}

\declaretheorem[numberlike=theorem,style=definition]{definition}
\declaretheorem[numberlike=theorem,style=remark]{remark}

\def\final{0}  %
\ifnum\final=0  %
	\newcommand{\todo}[1]{{\color{red}[{\tiny TODO: \bf #1}]\marginpar{\color{red}*}}}
	\newcommand{\znote}[1]{{\color{blue}[{\tiny zeh \bf #1}]\marginpar{\color{blue}*}}}
	\newcommand{\thatchaphol}[1]{{\color{purple}[{\tiny Thatchaphol: \bf #1}]\marginpar{\color{purple}*}}}
	\newcommand{\ky}[1]{{\color{orange}[{\tiny ky: \bf #1}]\marginpar{\color{orange}*}}}
\else %
	\newcommand{\znote}[1]{}
	\newcommand{\thatchaphol}[1]{}
	\newcommand{\todo}[1]{}
	\newcommand{\ky}[1]{}
\fi

\newcommand{\srch}{\textsc{Search}}
\newcommand{\rsrch}{\textsc{ResumeSearch}}
\newcommand{\expl}{\textsc{Explore}}
\newcommand{\augm}{\textsc{Augment}}

\title{Incremental Directed Minimum Cut\\ by Dynamizing Gabow's Algorithm}

\author{Thatchaphol Saranurak\thanks{University of Michigan, United States. Email: \texttt{thsa@umich.edu}. Supported by NSF Grant CCF-2238138 and a Sloan Fellowship.}
\and
Kaiyang Xie\thanks{University of Science and Technology of China, China. Email: \texttt{xkaiy219@gmail.com}}
\and
Zhaienhe Zhou\thanks{University of Science and Technology of China, China. Email: \texttt{zhaienhezhou@gmail.com}}}

\date{}

\begin{document}
\sloppy

\maketitle
\pagenumbering{gobble}
\begin{abstract}
We give the first incremental algorithm for directed global minimum cut. Given a directed graph with $n$ vertices undergoing $m$ edge insertions, our deterministic algorithm explicitly maintains a global minimum cut or certifies that its value is at least $k$ in $O(km\log n)$ total update time. Prior work required either that $k\le2$ or that the graph is undirected.

Our algorithm is a strict incremental extension of Gabow's state-of-the-art static algorithm~\cite{gabow1991matroid}, with no asymptotic loss in running time over the entire insertion sequence.
\end{abstract}

\pagenumbering{arabic}
\section{Introduction}

Maintaining connectivity under updates is a central problem in dynamic graph algorithms. In this paper, we study the \emph{incremental directed global minimum-cut} problem: given a directed graph undergoing edge insertions, maintain a global minimum cut, or certify that its value is at least $k$.

Formally, an $n$-vertex $m$-edge directed graph $G=(V,E)$ is \emph{$k$-edge-connected} if it remains strongly connected after deleting any set of fewer than $k$ edges. A minimum edge set whose deletion destroys strong connectivity is called a \emph{global minimum cut}. Thus, maintaining a global minimum cut is the dynamic counterpart of testing directed $k$-edge-connectivity. As usual, we assume $m\ge n$ throughout.

In the static setting, Gabow~\cite{gabow1991matroid} gave an $O(km\log n)$-time algorithm for testing directed $k$-edge-connectivity. For small $k$, this remains the fastest algorithm.
Despite many recent developments
\cite{chekuri2021faster,cen2022minimum,quanrud2025approximating,mosenzon2025almost,jiang2025approximating}, an exact algorithm with almost-linear time independent of $k$ remains a major open problem.

What happens in dynamic graphs? In \emph{undirected} graphs, dynamic edge-connectivity and minimum cut are very well developed~\cite{thorup2007fully,goranci2018incremental,goranci2023fully,jin2024fully,el2025fully,de2025tree,kenneth2026simple}. For example, one can maintain $k$-edge-connectivity in $n^{o(1)}$ update time even for $k=n^{o(1)}$ in fully dynamic graphs~\cite{el2026deterministic}.

In directed graphs, however, much less is known. Fast fully dynamic algorithms are unlikely even for the special case of strong connectivity, i.e.\ $k=1$: under standard conjectures, such algorithms require $n^{1-o(1)}$ update time~\cite{henzinger2015unifying,probst2020new}. It is therefore natural to focus on incremental and decremental settings.

Even in these restricted settings, prior nontrivial results for directed graphs were known only for very small connectivity. For $k=1$, there are near-linear total update time algorithms for incremental and decremental strong connectivity~\cite{italiano1986amortized,bernstein2019decremental,van2024almost}. For $k=2$, Georgiadis et al.\ gave incremental and decremental algorithms with super-linear total time for maintaining 2-edge-connectivity information in directed graphs~\cite{georgiadis2018incremental,georgiadis2017decremental,georgiadis2025faster}. Before this work, no nontrivial dynamic algorithm was known for directed global minimum cut once $k\ge 3$.

Our main result is the first such algorithm.

\begin{theorem}\label{thm:main}
	There is a deterministic algorithm that, given a directed graph $G$ with $n$ vertices undergoing $m$ edge insertions, explicitly maintains a global minimum cut $C\subseteq E(G)$ or reports that its size is at least $k$ using $O(km\log n)$ total update time.

	When the cut $C$ is maintained, the algorithm also explicitly maintains a vertex set $S$ and indicates that $C=E(S,V\setminus S)$ or $C=E(V\setminus S,S)$.
\end{theorem}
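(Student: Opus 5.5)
Our plan is to run Gabow's static algorithm incrementally and make sure that all of its work, summed over the whole insertion sequence, still costs $O(km\log n)$. First, reduce global connectivity to rooted connectivity. Fix an arbitrary root $r$. The global minimum cut of $G$ is the minimum of two quantities: the minimum $r$-rooted out-cut, $\min_{r\in S}|E(S,V\setminus S)|$, and the minimum $r$-rooted in-cut, obtained the same way in the reversed graph. Both graphs only receive edge insertions, and each inserted edge of $G$ appears as a single edge in each of them. So it suffices to maintain, incrementally, a minimum rooted cut of value less than $k$, together with its side $S$. This also produces the claimed output format: an out-cut gives $C=E(S,V\setminus S)$ and an in-cut gives $C=E(V\setminus S,S)$. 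Since rooted cut values never decrease under insertions, the overall answer is monotone too.

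Next, recall Gabow's characterization. The rooted connectivity is at least $j$ exactly when $G$ contains $j$ edge-disjoint spanning arborescences rooted at $r$ (Edmonds). Gabow builds these arborescences one unit at a time. The current family is a set of $j$ ``forests''. Each round adds one edge to the family by a search that follows the matroid-intersection / augmenting-path structure, organized through his ``rounds'' and ``cyclic trees''. Each round runs in $O(m\log n)$ total time. A round either completes all augmentations for level $j+1$ or gets stuck. When it gets stuck, the set $S$ of vertices reached by the search is a tight set, with $|E(S,V\setminus S)|=j$, and this is exactly the minimum cut.

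The dynamization keeps, for the current level $j$, the partial state of round $j+1$, namely the forests and the search structures (\srch, \expl, \augm). When the search is stuck, we output the stuck set $S$ as the cut. When an edge $(u,v)$ is inserted, there are two cases:
\begin{enumerate}
\item If the edge does not leave the current tight set, that is, it is not in $E(S,V\setminus S)$, we record it. It is only scanned when the search later reaches $u$.
\item If the edge does leave $S$, we call \rsrch\ from $u$. This continues the stuck search instead of restarting it, so every edge already scanned in this round stays scanned.
\end{enumerate}
If resuming leads to an augmentation, we perform \augm\ and continue the round. When the round finishes, $j$ increases, and we stop once $j=k$.

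The key invariant is that within a round every edge is scanned $O(1)$ times, with amortized $O(\log n)$ overhead from Gabow's data structures. The data structures involved are the disjoint-set/union-find style contractions and the $\log n$ factor from his cyclic-tree (or path-halving) bookkeeping. Given this invariant, each of the at most $k$ rounds costs $O(m\log n)$ over the whole sequence, where $m$ is the final number of edges, for a total of $O(km\log n)$.

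The main obstacle is proving that a suspended search can be resumed after insertions without invalidating what it has already computed. Concretely, we must show two things:
\begin{itemize}
\item Gabow's search is monotone: the set of reached vertices and labels computed before the insertion remains valid, because adding edges only adds options for augmenting paths and never removes one.
\item The stuck state correctly certifies a cut of value $j$ at every moment.
\end{itemize}
First we would try to recast Gabow's round as a graph search whose explored region only grows. Augmentations would be the only events that reset the state, and each augmentation would be charged to the $O(n)$ edges per level. We would then check that his amortization, which is potential-based over each round, survives edges arriving interleaved with the search. If parts of the round (such as the cyclic-tree structure) are not naturally monotone, the fallback is to rebuild those parts only at augmentations. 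There are $n-1$ augmentations per level, and this rebuilding must fit within Gabow's own bound for that round.
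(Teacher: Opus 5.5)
\textbf{Verdict: genuine gap in the running-time argument.} Several parts of your plan match the paper: the reduction to rooted cuts in $G$ and $G^{\mathrm{rev}}$, the level-by-level packing, and the idea of pausing a stuck search. Pausing works because an insertion only adds one edge-vertex and arcs incident to it, so explored states stay valid while the realization is unchanged. The problem is how you account for the cost.

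You place the $\log n$ in union-find/cyclic-tree bookkeeping and posit that within a level every edge is scanned $O(1)$ times. That cannot hold. Each augmentation changes the realization and hence the auxiliary graph (arcs disappear as well as appear), so the explored region must be discarded. A level contains up to $n-1$ augmentations, and a fresh search may scan $\Theta(m)$ edges. Nothing in your plan rules out $\Theta(nm)$ work per level, and charging each augmentation to ``$O(n)$ edges'' has no basis.

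The missing idea is Gabow's Bor\r{u}vka-style round structure:
\begin{itemize}
\item Keep $T_1,\dots,T_{k-1}$ as spanning trees and $T_k$ as a forest whose non-root components each contain exactly one deficient vertex.
\item In one round, search from every still-active component and insist on \emph{component-local} augmenting paths: treewise shortcut-free, no premature joining edge, deepest final joining edge.
\item Such a path stays inside its starting component until its last edge. So the searches of a round touch essentially disjoint edges and cost $O(m)$ in total with cyclic scanning.
\item The same properties make it legal to apply all paths of a round simultaneously, via the shortcut-free exchange lemma with final exchanges ordered by depth.
\item The number of $T_k$-components at least halves per round, so a level has $O(\log n)$ rounds.
\end{itemize}
The $\log n$ is the number of rounds. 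All pause/resume work must be charged to the $O(m)$ budget of the round in which the search was paused.

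Three incremental details are also off or missing.
\begin{itemize}
\item A path found by a resumed search need not be component-local, since the resumed exploration no longer follows the static search order. The paper therefore resets that component's exploration and re-runs the static \srch\ once the resumed search reaches a joining edge. This is affordable because it happens at most once per component per round.
\item Your resume trigger is misoriented. In the auxiliary graph, a non-tree edge $(u,v)$ becomes reachable only through its head, via a deficit-source or same-head exchange arc. So an insertion matters exactly when $v$ lies in the explored set $X$ (which excludes the root), i.e., when the edge enters $X$. The certified cut is $E(V\setminus X,X)$, not $E(X,V\setminus X)$. The resumed search starts from the new edge-vertex itself, with $s$ or an explored tree edge entering $v$ as its parent, not from $u$.
\item Monotonicity of reachability is not enough by itself for cyclic scanning. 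You must also check that its phase invariants hold again once the new edge is enqueued: $L_i$ is a subtree of $T_i$, the cyclic inclusion chain holds, and the queue condition holds. The paper gets this from the fact that all $L_i$ coincide at a paused state.
\end{itemize}
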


For every polylogarithmic $k$, this gives near-linear total update time. In particular, for $k\ge 3$, it is the first nontrivial dynamic global minimum-cut algorithm in directed graphs.

A central feature of our result is that it is a \emph{strict extension} of Gabow's classical static algorithm~\cite{gabow1991matroid} to the incremental setting \emph{without any slowdown}  over the entire insertion sequence. Moreover, the algorithm does not merely detect when the edge-connectivity crosses the threshold $k$; it explicitly maintains a global minimum cut throughout the update sequence until the cut size grows to $k$.

This extension is \emph{not} immediate. In the static setting, once Gabow's algorithm detects a minimum cut of size less than $k$, it terminates. In the incremental setting, however, future edge insertions may destroy the minimality of the current cut, so the algorithm must continue and find a new minimum cut if one exists. Restarting Gabow's algorithm from scratch after each such event would be too slow.

Our conceptual contribution is to show that, instead of restarting, one can resume the previous run of Gabow's algorithm and continue it efficiently after further insertions.

\paragraph{Expository contribution.}
In addition to dynamizing Gabow’s algorithm \cite{gabow1991matroid}, we present the static algorithm in a self-contained form. We hope that this exposition, which separates the round-robin framework, the structure of component-local augmenting paths, the warm-up brute-force implementation, and the cyclic-scanning implementation, will also be useful to readers who wish to understand Gabow’s algorithm independently of the incremental extension.

\paragraph*{Related work for fixed-pair connectivity.}
For a fixed ordered pair $(s,t)$, recent dynamic algorithms for \emph{thresholded} min-cost flow imply $m^{1+o(1)}$ total update time algorithms for testing whether $s$ and $t$ remain $k$-edge-connected in incremental and decremental directed graphs~\cite{chen2024almost,van2024almost}. This also yields algorithms that keep track of the edge-connectivity between $s$ and $t$ up to value $k$ in total time $km^{1+o(1)}$.\footnote{See also~\cite{gupta2018simple} for a simple $O(mk)$-time incremental algorithm.}

These algorithms rely on interior-point-method machinery, crucially exploiting the fact that the min-cost flow admits an $O(m)$-size linear program. In contrast, for global directed minimum cut, the best-known linear programming formulations have size $O(mn)$. Thus, this recent framework does not seem to extend to our setting.

\section{Overview}

Fix a directed graph $G=(V,E)$ and a root $a\in V$. For $X\subseteq V$, let
\(
\rho_G(X):=E(V\setminus X,X)
\)
denote the \emph{in-cut} of $X$. For an edge set $A$ and a vertex $v$, let
\(
\deg_A^-(v):=|A\cap \rho_G(v)|
\)
denote the indegree of $v$ with respect to $A$.

\paragraph*{From global minimum cut to rooted minimum cut.}
Fix an arbitrary root $a\in V$. A \emph{rooted cut} is an edge set whose removal makes some vertex unreachable from $a$. To find a global minimum cut, it suffices to solve the following rooted problem on both $G$ and the reverse graph $G^{\mathrm{rev}}$ and return the smaller cut: maintain, for a fixed root $a$, a minimum rooted cut $\rho_G(X)$ with $X\subseteq V\setminus\{a\}$, or certify that every such cut has size at least $k$.

\paragraph*{Edmonds' packing view of rooted connectivity.}
When we say that an edge set is a forest or a spanning tree, we ignore edge directions. A \emph{rooted $k$-intersection} is an edge set that can be partitioned into $k$ forests $T=T_1\sqcup\dots\sqcup T_k$ such that $\deg_T^-(a)=0$ and $\deg_T^-(v)\le k$ for every $v\neq a$. It is \emph{complete} if $\deg_T^-(v)=k$ for all $v\neq a$; in that case each $T_i$ is a spanning tree, though not necessarily an arborescence.

Edmonds~\cite{edmonds1973edge} showed the duality between rooted minimum cuts and $k$-intersections.
\begin{lemma}[Edmonds' Characterization]
	\label{lem:edmonds}
	The graph $G$ contains a complete rooted $k$-intersection if and only if every nonempty $X\subseteq V\setminus\{a\}$ satisfies
	\(
	|\rho_G(X)|\ge k.
	\)
\end{lemma}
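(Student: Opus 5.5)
The plan is to derive the characterization from two classical facts: Edmonds' disjoint-arborescence theorem and the matroid-intersection view of a complete rooted $k$-intersection. The ``only if'' direction is a counting argument. The ``if'' direction goes through arborescences.

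\emph{Only if.} Let $T=T_1\sqcup\dots\sqcup T_k$ be complete and fix a nonempty $X\subseteq V\setminus\{a\}$. Every vertex of $X$ has exactly $k$ incoming edges of $T$, so $\sum_{v\in X}\deg^-_T(v)=k|X|$. Each $T_i$ is a forest, so at most $|X|-1$ of its edges have both endpoints in $X$. Hence at most $k(|X|-1)$ of the $T$-edges entering vertices of $X$ start inside $X$. At least $k|X|-k(|X|-1)=k$ of them start in $V\setminus X$, which gives $|\rho_G(X)|\ge k$.

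\emph{If.} Assume every nonempty $X\subseteq V\setminus\{a\}$ has $|\rho_G(X)|\ge k$. By Edmonds' branching theorem~\cite{edmonds1973edge}, $G$ contains $k$ edge-disjoint spanning arborescences $B_1,\dots,B_k$ rooted at $a$. Set $T_i:=B_i$ and $T=\bigsqcup_i T_i$. Each $B_i$ is a spanning tree when directions are ignored, so each $T_i$ is a forest. No edge of any $B_i$ enters $a$, so $\deg^-_T(a)=0$. Every $v\neq a$ has exactly one incoming edge in each $B_i$, and these edges are distinct because the $B_i$ are edge-disjoint, so $\deg^-_T(v)=k$. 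Thus $T$ is a complete rooted $k$-intersection.

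The main obstacle is the branching theorem itself. If it may be cited as Edmonds' result, the proof is finished. For a self-contained argument I would use Lovász's proof. Induct on the total size of the arborescences grown so far: extend $B_1$ one edge at a time while keeping the invariant $|\rho_{G-E(B_1)}(X)|\ge k-1$ for every nonempty $X\subseteq V\setminus\{a\}$ not containing all of $V(B_1)$'s complement. The step uses submodularity of $|\rho(\cdot)|$: take a maximal tight set, and find an edge leaving $V(B_1)$ that avoids all tight sets. This uncrossing step is the delicate part. Once $B_1$ spans $V$, the remaining graph has rooted connectivity at least $k-1$, and induction on $k$ finishes. Alternatively, one could phrase the statement as a matroid intersection (the union of $k$ graphic matroids, intersected with the partition matroid allowing indegree at most $k$ at each $v\neq a$ and $0$ at $a$) and verify the rank formula via the cut condition. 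I expect the arborescence route to be cleaner.
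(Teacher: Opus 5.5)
Your proof is correct and follows the paper, which states this lemma without proof and attributes it to Edmonds. Your forest-counting bound for the ``only if'' direction and your appeal to the disjoint-arborescence theorem for the ``if'' direction are exactly the standard justification. (If you ever write out the optional Lov\'asz sketch, fix two points. First, state the invariant $|\rho_{G-E(B_1)}(X)|\ge k-1$ for \emph{all} nonempty $X\subseteq V\setminus\{a\}$, since with your qualifier it becomes vacuous once $B_1$ spans $V$. Second, pick an inclusion-\emph{minimal} tight set $X\not\subseteq V(B_1)$ and add an edge from $X\cap V(B_1)$ to $X\setminus V(B_1)$. If that edge entered another tight set $W$, uncrossing would make $X\cap W\subsetneq X$ tight, a contradiction; with a maximal $X$ the uncrossing only forces $W\subsetneq X$, which is no contradiction.)
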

This motivates the following algorithmic strategy: try to build a complete rooted $k$-intersection. If we succeed, then we conclude that the rooted minimum cut size is now at least $k$. If we fail, this is because a minimum rooted cut has size less than $k$, and we will need to return it.

\paragraph*{The state in stage $k$.}
The algorithm proceeds in stages $k=1,2,\dots$. At the beginning of stage $k$, we already have a complete $(k-1)$-intersection and try to extend it to a complete $k$-intersection.

During this stage, we maintain a \emph{realization} $T=(T_1,\dots,T_k)$ such that
\begin{enumerate}
	\item $T_1\cup\dots\cup T_k$ is a rooted $k$-intersection, and
	\item $T_1\cup\dots\cup T_{k-1}$ is a complete rooted $(k-1)$-intersection.
\end{enumerate}
Thus every vertex $v\neq a$ has indegree either $k-1$ or $k$ in $T:=T_1\cup\dots\cup T_k$. We call $v$ \emph{deficient} if its indegree is $k-1$.
Observe that in this stage, any rooted cut of size less than $k$ has size exactly $k-1$ and is minimum.

The connected components of $T_k$ are the key objects. We call such a component an \emph{$T_k$-component}.
We will ensure that every $T_k$-component not containing $a$ has exactly one deficient vertex.

The goal of stage $k$ is to keep merging $T_k$-components until $T_k$ itself becomes a spanning tree.
Once we succeed, we successfully build a complete rooted $k$-intersection. If we fail, we must find a rooted cut of size less than $k$.

\paragraph*{Augmenting paths and locality.}
To merge a $T_k$-component $K$, the algorithm starts from its deficient vertex $s$ and searches for an \emph{augmenting path}, formally defined in \Cref{sec:prelim}. Such a path is a sequence of edge exchanges with two key effects: (1) $s$ is no longer deficient, and (2) it merges $K$ with another $T_k$-component.

A crucial fact from Gabow's work is that one can always choose a \emph{component-local augmenting path}: if the path starts in $K$, then every edge on the path except the last one has both endpoints in $K$, and only the final edge leaves $K$ to join another $T_k$-component. %

\paragraph*{Locality yields parallel augmentation.}

The locality of component-local augmenting paths allows us to search from many $T_k$-components in parallel. The algorithm resembles Bor\r{u}vka's MST algorithm.

In one round, all current $T_k$-components are marked active. For each active $T_k$-component $K$, we search from its deficient vertex. If the search finds a component-local augmenting path whose last edge joins $K$ to another $T_k$-component $K'$, then both $K$ and $K'$ are marked inactive for the rest of the round. After all searches finish, we augment along all found paths simultaneously.

This yields geometric progress: each successful search deactivates at most two active $T_k$-components, so at least half of the active $T_k$-components succeed in every round. Hence the number of $T_k$-components drops by at least a factor of two per round, and there are only $O(\log n)$ rounds in one stage.

\paragraph*{Dynamizing Gabow's.}
Up to this point, this is essentially Gabow's static algorithm. The only real issue is what to do with a failed search.

In the static setting, a failed search will reveal a rooted cut of size less than $k$ and the algorithm may stop. In the incremental setting, however, the failure may be temporary: a later edge insertion can create a new augmenting path. Restarting the search from scratch after each insertion would lose the running time bound.

Our main idea is to \emph{pause} a failed search rather than discard it. The search is a reachability process in the auxiliary graph (whose definition we omit here), and an edge insertion in $G$ only adds new vertices and arcs there. Thus previously reached states remain reached. We store the explored region of each failed search and, after future insertions, resume only from newly reachable states. Once a resumed search reaches a joining edge, we rerun the search within the relevant $T_k$-component to extract a component-local augmenting path, then continue the process.

\paragraph*{Why the running time matches Gabow's.}
To implement resumable searches efficiently, we use Gabow's cyclic-scanning procedure. For a fixed stage $k$, this lets us charge the total work of all search and resume operations in one round to the graph edges, for a total of $O(m)$ time in that round.

Combining this with the $O(\log n)$ rounds per stage and the outer loop over $k$ stages, we obtain $O(km\log n)$ total update time for the rooted problem. Running the same algorithm on both $G$ and $G^{\mathrm{rev}}$ gives the same bound, up to a factor of two, for directed global minimum cut.

\section{Preliminaries}
\label{sec:prelim}

We use the terminology from the overview, including realizations and deficient vertices. For an edge set $A\subseteq E$, let $U(A)$ denote its underlying undirected graph.

\paragraph*{Joining edges and fundamental paths.}
Let $F$ be a forest. An $F$-component is a connected component of $U(F)$.
An edge $e=(u,v)\in E$ is \emph{joining for $F$} if $u$ and $v$ lie in different $F$-components.
If $u$ and $v$ are in the same $F$-component, let $P_F(u,v)$ denote the unique path in $U(F)$. If $e=(u,v)$ is not joining for $F$, write $P_F(e):=P_F(u,v)$ and call it the \emph{fundamental path of $e$ with respect to $F$}.

\paragraph*{Auxiliary graph and augmenting paths.}
Given a realization $T=(T_1,\dots,T_k)$, the \emph{auxiliary graph} $D(T)$ has vertex set
\[
	V\sqcup E\sqcup\{t\},
\]
where $t$ is a distinguished sink. We call the members of the copy of $E$ in $V(D(T))$ \emph{edge-vertices}, and write $A(D(T))$ for the arc set. Throughout, \emph{edges} belong to the original graph $G$, whereas directed connections in $D(T)$ are \emph{arcs}. The arcs are of the following four types.
\begin{enumerate}
	\item \textbf{Deficit-source arcs:} \\For every deficient vertex $v$ and every edge $e\in \rho_G(v)\setminus T$, add an arc $(v,e)$.
	\item \textbf{Joining-sink arcs:}\\ For every $T_k$-joining edge $e$, add an arc $(e,t)$.
	\item \textbf{Same-head exchange arcs:}\\ For every $e\in T$ and $f\in E\setminus T$ with $\mathrm{head}(e)=\mathrm{head}(f)$, add an arc $(e,f)$.
	\item \textbf{Tree exchange arcs:}\\ For every $i\in [k]$, every $f\notin T_i$, and every $e\in P_{T_i}(f)$, add an arc $(f,e)$.
\end{enumerate}

\begin{definition}[Augmenting path]
	\label{def:aug_path}
	An \emph{augmenting path} is a directed path in $D(T)$ that starts at a deficient vertex $s\in V$ and ends at the sink $t$. Equivalently, it is a sequence
	\[
		P=(s,e_1,e_2,\dots,e_r,t)
	\]
	where each $e_j\in E$ and consecutive elements are connected by arcs of $D(T)$.
\end{definition}

\paragraph*{Augmentation.}
Next, we describe how an application of an augmenting path changes each forest $T_i$.
Let $\mathrm{idx}:E\to [k]\cup\{\emptyset\}$ be the index function, where
$\mathrm{idx}(e)=i$ if $e\in T_i$ and $\mathrm{idx}(e)=\emptyset$ if $e\notin T$.

Given an augmenting path $P=(s,e_1,\ldots,e_r,t)$, write
$\alpha_j=\mathrm{idx}(e_j)$ as the indices before augmentation.
Augmentation along $P$ changes the indices of $e_1,\ldots,e_r$ from
\[
	(\alpha_1,\alpha_2,\ldots,\alpha_r)
\]
to
\[
	(\alpha_2,\alpha_3,\ldots,\alpha_r,k).
\]
Equivalently, for each $j=1,\ldots,r-1$ with
$\alpha_{j+1}\neq\emptyset$, edge $e_j$ replaces $e_{j+1}$ in
$T_{\alpha_{j+1}}$, i.e.
\[
	T_{\alpha_{j+1}} \gets T_{\alpha_{j+1}}-e_{j+1}+e_j,
\]
and then $e_r$ is inserted into $T_k$.
Thus, $e_1$ enters the solution,
each edge $e_j$ takes over the old label of its successor, and the last edge $e_r$ merges a $T_k$-component of $s$ with another $T_k$-component.  This increases $\deg_T^-(s)$ by one and leaves every other indegree unchanged.
See \Cref{fig:augmentation-exchanges}.

Our algorithm description in \Cref{sec:incremental} usually omits the sink vertex $t$ in the augmenting path and stops at a $T_k$-joining edge $e_r$. This is purely for convenience.

\begin{figure}[t]
	\centering
	\begin{tikzpicture}[
			>=Latex,
			font=\small,
			edge/.style={draw, rounded corners=2pt, minimum width=8mm, minimum height=6mm, inner sep=1.5pt, fill=white},
			idx/.style={draw, rounded corners=2pt, fill=gray!10, minimum width=8mm, minimum height=5mm, inner sep=1pt, font=\scriptsize},
			lab/.style={font=\scriptsize},
			expl/.style={font=\scriptsize, align=center, text width=13.0cm},
			aux/.style={->, thick},
			move/.style={->, thick, densely dashed},
			exchange/.style={->, thick},
			op/.style={draw, rounded corners=3pt, fill=gray!5, inner sep=3pt, align=center, font=\scriptsize},
			comp/.style={draw, dashed, rounded corners=4pt, inner sep=5pt},
			v/.style={circle, draw, inner sep=1pt, minimum size=5mm, font=\scriptsize}
		]

		\node[edge] (s)  at (-1.35,0) {$s$};
		\node[edge] (e1) at (0,0) {$e_1$};
		\node[edge] (e2) at (1.55,0) {$e_2$};
		\node[edge] (e3) at (3.10,0) {$e_3$};
		\node        (dots) at (4.55,0) {$\cdots$};
		\node[edge] (erm) at (6.00,0) {$e_{r-1}$};
		\node[edge] (er)  at (7.65,0) {$e_r$};
		\node[edge] (t)   at (9.00,0) {$t$};

		\draw[aux] (s) -- (e1);
		\draw[aux] (e1) -- (e2);
		\draw[aux] (e2) -- (e3);
		\draw[aux] (e3) -- (dots);
		\draw[aux] (dots) -- (erm);
		\draw[aux] (erm) -- (er);
		\draw[aux] (er) -- node[above=5pt, midway, lab, fill=white, inner sep=1pt] {joins $T_k$} (t);

		\node[lab, anchor=east] at (-1.70,0.90) {$\mathrm{idx}_{\rm old}$};
		\node[idx] (b1) at (0,0.90) {$\alpha_1=\emptyset$};
		\node[idx] (b2) at (1.55,0.90) {$\alpha_2$};
		\node[idx] (b3) at (3.10,0.90) {$\alpha_3$};
		\node[idx] (bm) at (6.00,0.90) {$\alpha_{r-1}$};
		\node[idx] (br) at (7.65,0.90) {$\alpha_r$};

		\node[lab, anchor=east] at (-1.70,-0.95) {$\mathrm{idx}_{\rm new}$};
		\node[idx] (a1) at (0,-0.95) {$\alpha_2$};
		\node[idx] (a2) at (1.55,-0.95) {$\alpha_3$};
		\node[idx] (a3) at (3.10,-0.95) {$\alpha_4$};
		\node[idx] (am) at (6.00,-0.95) {$\alpha_r$};
		\node[idx] (ar) at (7.65,-0.95) {$k$};

		\draw[move] (b2.south) to[out=-90,in=90] (a1.north);
		\draw[move] (b3.south) to[out=-90,in=90] (a2.north);
		\draw[move] (br.south) to[out=-90,in=90] (am.north);

		\node[lab, anchor=east] at (-1.70,-2.25) {forest exchanges};

		\node[op] (op1) at (0.78,-2.25)
		{$T_{\alpha_2}$: delete $e_2$, add $e_1$};
		\node[op] (op2) at (2.35,-3.10)
		{$T_{\alpha_3}$: delete $e_3$, add $e_2$};
		\node at (4.45,-2.70) {$\cdots$};
		\node[op] (opm) at (6.83,-2.25)
		{$T_{\alpha_r}$: delete $e_r$, add $e_{r-1}$};
		\node[op] (opp) at (8.25,-3.10)
		{$T_k$: add $e_r$};

		\draw[exchange] (e2.south) to[out=-100,in=80] (op1.north);
		\draw[exchange] (e1.south) to[out=-80,in=100] (op1.north);
		\draw[exchange] (e3.south) to[out=-100,in=80] (op2.north);
		\draw[exchange] (e2.south) to[out=-80,in=100] (op2.north);
		\draw[exchange] (er.south) to[out=-100,in=80] (opm.north);
		\draw[exchange] (erm.south) to[out=-80,in=100] (opm.north);
		\draw[exchange] (er.south) to[out=-85,in=90] (opp.north);

		\node[expl] at (4.05,-3.85)
		{Equivalently, for $j=1,\ldots,r-1$, edge $e_j$ replaces $e_{j+1}$ in the forest that used to own $e_{j+1}$; if $\alpha_{j+1}=\emptyset$, that exchange box is omitted. The last edge $e_r$ is then inserted into $T_k$.};

		\begin{scope}[yshift=-5.80cm]
			\node[lab, anchor=east] at (-1.70,-0.95) {component effect};

			\node[v] (sv) at (0,-0.8) {$s$};
			\node[v] (xv) at (1.0,-0.35) {};
			\node[v] (uv) at (2.0,-0.8) {$u$};
			\node[v] (zv) at (1.0,-1.35) {};
			\node[v] (qv) at (-0.85,-0.35) {};
			\draw[thick] (sv) -- (xv) -- (uv) -- (zv) -- (sv);
			\draw[thick] (qv) -- (sv);
			\node[comp, fit=(sv)(xv)(uv)(zv)(qv), label=above:{}] (Ki) {};
			\node[lab] at (0.60,0.25) {$K_i$: component containing $s$};

			\node[v] (yv) at (5.45,-0.8) {$v$};
			\node[v] (rv) at (6.50,-0.35) {$r_j$};
			\node[v] (wv) at (6.50,-1.35) {};
			\draw[thick] (yv) -- (rv) -- (wv) -- (yv);
			\node[comp, fit=(yv)(rv)(wv), label=above:{}] (Kj) {};
			\node[lab] at (5.95,0.25) {$K_j$: another $T_k$-component};

			\draw[aux] (qv) -- node[above, lab] {$e_1$ enters} (sv);

			\draw[aux, very thick] (yv) -- node[above, lab] {$e_r$} node[below, lab] {insert into $T_k$} (uv);

			\node[lab, align=center] at (0.75,-2.15)
			{$e_1$ and all non-final exchanges\\stay inside $K_i$};
			\node[lab, align=center] at (5.95,-2.15)
			{final step: $e_r$ joins $K_j$ to $K_i$\\and merges the two $T_k$-components};
			\node[lab, align=center] at (3.25,-2.95)
			{$\deg_T^-(s)$ increases by one; every other vertex loses and gains one incoming edge along the exchange chain.};
		\end{scope}

	\end{tikzpicture}
	\caption{Augmentation along $P=(s,e_1,\ldots,e_r,t)$. The index vector changes from $(\alpha_1,\ldots,\alpha_r)$ to $(\alpha_2,\ldots,\alpha_r,k)$.}
	\label{fig:augmentation-exchanges}
\end{figure}

\paragraph*{A shortcut-free tree-exchange lemma.}
After augmentation, each forest $T_i$ undergoes a sequence of exchanges. The lemma below gives a sufficient condition for these exchanges to preserve acyclicity.

\begin{lemma}[Shortcut-Free Tree-Exchange Lemma]
	\label{lem:matroid_exchange}
	Let $F$ be a forest, and let $(e_1,f_1),\dots,(e_t,f_t)$ be pairs such that $e_i\notin F$, $f_i\in F$, and $f_i\in P_F(e_i)$ for every $i$. Assume moreover that for every $i<j$, $f_j\notin P_F(e_i)$.
	Then $F+\{e_1,\dots,e_t\}-\{f_1,\dots,f_t\}$ is again a forest.
\end{lemma}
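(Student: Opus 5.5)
The plan is to perform the exchanges one at a time in \emph{reverse} order $t,t-1,\dots,1$, and to prove by downward induction on $i$ that each intermediate edge set
\[
F^{(i)} := F+\{e_{i},\dots,e_t\}-\{f_{i},\dots,f_t\}
\]
is a forest. We set $F^{(t+1)}:=F$. The case $i=1$ is the claim. The basic fact I will use is the standard single-exchange property. If $H$ is a forest, $e\notin H$ has endpoints in the same $H$-component, and $f$ lies on the unique $H$-path between those endpoints, then $H+e$ contains exactly one cycle, namely that path plus $e$. Since $f$ lies on this cycle, $H+e-f$ is again a forest.

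For the inductive step, assume $F^{(i+1)}$ is a forest. I will show that the path $P_F(e_i)$ survives intact in $F^{(i+1)}$. The only edges of $F$ removed in forming $F^{(i+1)}$ are $f_{i+1},\dots,f_t$. By hypothesis $f_j\notin P_F(e_i)$ for every $j>i$. Hence every edge of $P_F(e_i)$ still lies in $F^{(i+1)}$. So $P_F(e_i)$ is a path in the forest $F^{(i+1)}$ joining the endpoints of $e_i$, and by uniqueness of paths in a forest it equals $P_{F^{(i+1)}}(e_i)$. In particular, $f_i\in P_F(e_i)$ is still present in $F^{(i+1)}$; this uses $f_i\neq f_j$ for $j>i$, which holds because $f_i\in P_F(e_i)$ while $f_j\notin P_F(e_i)$. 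Moreover, $f_i$ lies on the fundamental path of $e_i$ in $F^{(i+1)}$.

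The remaining point, and the only slightly delicate one, is to check that $e_i\notin F^{(i+1)}$, so that adding it really creates a cycle. First, $e_i\notin F$ by assumption. Second, $e_i\neq e_j$ for $j>i$: otherwise $f_j\in P_F(e_j)=P_F(e_i)$, contradicting the hypothesis $f_j\notin P_F(e_i)$. Applying the single-exchange property with $H=F^{(i+1)}$, $e=e_i$, $f=f_i$ then shows that $F^{(i)}=F^{(i+1)}+e_i-f_i$ is a forest, which completes the induction. The main conceptual step is choosing the reverse order, so that the shortcut-free hypothesis is exactly what guarantees the earlier fundamental paths are untouched by the later exchanges.
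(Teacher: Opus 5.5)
Your proof is correct, but it takes a different route from the paper's; the two arguments are essentially dual.

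\emph{The paper's route.} It argues by contradiction on the final edge set. Given a cycle $C$, it picks the largest $j$ with $e_j\in C$ and considers the cut obtained by deleting $f_j$ from its component of $F$. No surviving edge of $F$ crosses this cut. No $e_i$ with $i<j$ crosses it, because $f_j\notin P_F(e_i)$. No $e_i$ with $i>j$ lies in $C$, by maximality. So $e_j$ would be the only edge of $C$ crossing a cut, which is impossible. Unrolled, this is a cut-based argument run in the \emph{forward} order: delete all of $f_1,\dots,f_t$ first, then add $e_1,\dots,e_t$ in increasing order, with each $e_j$ joining two distinct components.

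\emph{Your route.} You perform the paired swaps in \emph{reverse} order and use the fundamental-cycle property. The same hypothesis guarantees that $P_F(e_i)$ is still the fundamental path of $e_i$ in the current forest when its swap is performed.

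\emph{Comparison.} The paper's version is shorter and never tracks intermediate forests. Yours is constructive and shows that every intermediate set $F^{(i)}$ is a forest. The exchanges are therefore realizable as a sequence of ordinary single swaps whose fundamental paths can all be read off the original $F$. As a byproduct, you also show that the hypotheses force the $e_i$ to be pairwise distinct and the $f_i$ to be pairwise distinct. The exchange therefore preserves $|F|$, which is convenient when the lemma is applied to the spanning trees $T_h$ with $h<k$.
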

\begin{proof}
	For each $j$, removing $f_j$ from $F$ separates its connected component into two parts. Since $f_j\in P_F(e_j)$, the edge $e_j$ crosses this cut.

	Suppose the final graph contains a cycle $C$, and choose the largest index $j$ such that $e_j\in C$. Since $e_j$ crosses the cut of $F-f_j$, the cycle $C$ must contain another edge crossing the same cut. No edge of $F-f_j$ crosses it. If this other edge is some $e_i$ with $i<j$, then $f_j\in P_F(e_i)$, contradicting the assumption. If it is some $e_i$ with $i>j$, this contradicts the choice of $j$. Hence no such cycle exists, and the final graph is a forest.
\end{proof}
Intuitively, the condition says that no later exchange deletes an edge from the fundamental path that justified an earlier exchange. Later, this will be exactly the treewise shortcut-free property of a component-local augmenting path.

\section{Incremental Extension of Gabow's Algorithm}
\label{sec:incremental}

We prove \Cref{thm:main} in this section.
We begin by restating the static ingredients of Gabow's algorithm in the language of this paper. These are the round-robin framework, the structure of augmenting paths, and the fact that the augmenting paths found in one round can be applied simultaneously. We isolate these facts because our incremental algorithm reuses them almost verbatim.

The only new issue in the incremental setting is the following. In the static algorithm, an unsuccessful search certifies that the current graph already has a rooted cut of size less than $k$, so the algorithm may terminate (by \cref{lem:edmonds}). In the incremental setting, such a failed search is only a certificate for the \emph{current} graph: after future edge insertions, a new augmenting path may appear. Our main contribution is to show that Gabow's search can be \emph{paused} when it fails and later \emph{resumed} after edge insertions, while preserving the same structural invariants.

The section is organized as follows. In \Cref{sec:rr_framework} we state the round-robin framework. In \Cref{sec:static_gabow} we record the static structural properties inherited from Gabow. In \Cref{sec:search_interface} we describe the new incremental issue and state the guarantees required from the search procedures $\srch,\rsrch$. In \Cref{sec:brute_augment} we give a brute-force pause/resume implementation. In \Cref{sec:cyclic_scanning} we give the linear-time cyclic-scanning implementation. Finally, in \Cref{sec:proof_main_incremental} we complete the proof of \Cref{thm:main}.

\subsection{Round-Robin Framework}
\label{sec:rr_framework}

Fix a value of $k$, and suppose that we already have a complete $(k-1)$-intersection. As in Gabow's algorithm, the goal is to extend it to a complete $k$-intersection. We maintain a realization
\[
	T=(T_1,\ldots,T_k),
\]
where $T_1,\ldots,T_{k-1}$ are spanning trees and $T_k$ is a forest. Every connected component of $T_k$ (except the one containing $a$) has exactly one deficient vertex, and we call each such component a \emph{$T_k$-component}.

The algorithm proceeds in rounds. In one round, every active $T_k$-component not containing $a$ attempts to find an augmenting path from its deficient vertex. Whenever an augmenting path joins two $T_k$-components, those two components are marked inactive for the rest of the round. At the end of the round, all augmenting paths found in that round are applied simultaneously.

Compared with the static algorithm, the only additional feature is how we handle an unsuccessful search. In the static setting, failure is terminal. In the incremental setting, we instead pause that search and resume it after future edge insertions.

\begin{algorithm}[H]
	\caption{Round Robin for Incremental $k$-intersection}\label{alg:round_robin}
	$k\gets 0$\;
	\While{true}{
		$k\gets k+1$\;
		\While{$T_k$ is not a spanning tree}{\label{line:while_non_full}
			Let $K \coloneqq \{K_1, K_2, \cdots, K_q\}$ be the connected components of $U(T_k)$\;
			Mark all $K_i$ as active\;
			\For{each $i \in [q]$}{
				\If{$K_i$ does not contain $a$, and is still active}{\label{line:check_active}
					Let $s$ be the deficient vertex in $K_i$\;
					\If{$\srch(s) = \bot$}{
						\Repeat{$\rsrch(s,e) \neq \bot$}{
						Report $k-1$\tcp*{$\rho_G(\{s\}\cup V(\text{explored edge-vertices of }D(T)))$ is a cut of size $k-1$}\label{line:report_k_minus_1}
							Receive new edge insertion $e=(x,y)$\;
						}
					}

					$P\gets$ Reset all exploration states inside $K_i$ and re-run $\srch(s)$\;\label{line:search_again}
					Assuming the last joining edge in $P$ joins $K_i$ and $K_j$, we mark $K_i$ and $K_j$ as inactive\;
				}
			}
			For every path $P$ produced during this round, $\augm(P)$\;\label{line:augment_all}
		}
	}
\end{algorithm}

One might consider simplifying the algorithm by extracting the path directly from $\rsrch$. However, an augmenting path returned by $\rsrch$ does not naturally satisfy the desired treewise shortcut-free property~\cref{def:good_aug_path} (intuitively, a resumed BFS does not necessarily preserve the shortest-path distances) in the presence of edge insertions. To avoid complicating the proof, we reset the exploration states and re-run a static search. Since each component triggers this re-run at most once per round, this re-run does not affect the overall asymptotic complexity.

We remark that the pseudocode is written in an offline style for simplicity. In an online implementation, one stores the loop state and the exploration state as global variables and resumes the algorithm after each edge insertion.

The framework above is useful only if the search procedures have the right structural properties. We next recall some facts from Gabow's static algorithm.

\subsection{Static Ingredients Inherited from Gabow}
\label{sec:static_gabow}

This subsection is purely expository: it restates the static ingredients from Gabow in our notation. No incremental issue appears yet.

\subsubsection*{Component-local augmenting paths}

The basic definition of augmenting path from \Cref{def:aug_path} is not strong enough for our round-robin algorithm, because we must later augment along many paths simultaneously. We therefore use the following stronger notion.

\begin{definition}[Component-Local Augmenting Path]
	\label{def:good_aug_path}
	Let $P=(s,e_1,\ldots,e_\ell,t)$ be an augmenting path in a realization
	$T=(T_1,\ldots,T_k)$, where $T_1,\ldots,T_{k-1}$ are spanning trees rooted
	at $a$. We call $P$ a \emph{component-local augmenting path} if it satisfies the following
	properties:
	\begin{itemize}
		\item \textbf{Treewise Shortcut-Free:}
		      For every tree $T_i$ and every tree exchange arc $(e_j,e_{j+1})$ of $P$ with
		      $e_{j+1}\in P_{T_i}(e_j)$,
		      \[
			      \{e_{j+2},\ldots,e_\ell\}\cap P_{T_i}(e_j)=\emptyset.
		      \]

		\item \textbf{No Premature Joining Edge:} $\forall j<\ell, e_j$ is not $T_k$-joining. Moreover, for every tree exchange arc $(e_j,e_{j+1})$ of $P$ with $j+1<\ell$ and $e_{j+1}\in P_{T_i}(e_j)$, $P_{T_i}(e_j)$ contains no $T_k$-joining edge.

		\item \textbf{Deepest Final Joining Edge:}
		      If the last exchange of $P$ is $(e,f)$, where $f\in T_i$ is
		      $T_k$-joining, then $f$ maximizes $d(f)$ among all
		      $T_k$-joining edges in $P_{T_i}(e)$, breaking ties by the smallest
		      edge identifier. Here, $d(g)$ is the depth of an edge
		      $g\in T_i$, for $i<k$, in $U(T_i)$ rooted at $a$.
	\end{itemize}
\end{definition}

Note that any shortest augmenting path satisfies both \emph{Treewise Shortcut-Free} and \emph{No Premature Joining Edge} properties, but our algorithm will find a component-local augmenting path even though it is not the shortest path.
The intuition of \emph{Deepest Final Joining Edge} condition is as follows: assume $e \coloneqq (u,v)$ where $u,v \in V(K_i)$, the condition enforces that $f$ is the first edge leaving $V(K_i)$ from either side of the fundamental path $P_{T_i}(e)$.

The conditions in \Cref{def:good_aug_path} are without loss of generality. Indeed, starting from any augmenting path, one can repeatedly remove shortcuts, truncate at the earliest joining edge whenever one appears, and choose the deepest possible final joining edge, ensuring that all three properties are satisfied.

\cref{lem:augment_path_structure} justifies the naming of \emph{component-local} augmenting path. It shows that a component-local augmenting path stays in a single $T_k$-component until its last step, and that augmenting along it merges exactly two $T_k$-components while preserving validity. This is a fact from the static algorithm by Gabow.

\begin{restatable}[Component Locality of Augmenting Paths~\cite{GABOW1995259}]{lemma}{AugmentingPathStructure}
	\label{lem:augment_path_structure}

	Let $P$ be a component-local augmenting path starting from a deficient vertex $s$ in a $T_k$-component $K_i$. Then, every edge in $P$ except for the final $T_k$-joining edge has both endpoints in $V(K_i)$. The final $T_k$-joining edge connects a vertex in $V(K_i)$ to a vertex in $V(K_j)$ for some $j \neq i$.

	Moreover, augmenting along $P$ results in a valid $k$-intersection where $T_1, \dots, T_{k-1}$ are spanning trees, and:
	\begin{itemize}
		\item Apart from the final exchange possibly removing the joining edge $e_r$ from its previous tree, augmenting along $P$ modifies only the internal structure of $T[V(K_i)]$. Adding $e_r$ to $T_k$ then merges $K_i$ and $K_j$.

		\item The augmentation increases $\deg_T^-(s)$ by one but leaves the indegrees of all other vertices unchanged. Consequently, the merged component has the deficient vertex of $K_j$ as its unique deficient vertex; if $a\in K_j$, the merged component contains $a$ and has no deficient vertex.
	\end{itemize}
\end{restatable}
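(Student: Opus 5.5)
We will prove the statement by induction along the path $P=(s,e_1,\ldots,e_\ell,t)$, showing that every non-final edge stays inside $V(K_i)$, and then invoke \Cref{lem:matroid_exchange} to verify validity after augmentation.

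\emph{Locality.} The first edge $e_1\in\rho_G(s)\setminus T$ has head $s\in V(K_i)$. By No Premature Joining Edge, $e_1$ is not $T_k$-joining, so its tail also lies in $K_i$. Now suppose $e_j$ has both endpoints in $V(K_i)$ and $j<\ell$, and consider the arc $(e_j,e_{j+1})$.
\begin{itemize}
\item If it is a same-head arc, then $e_{j+1}$ shares its head with $e_j$. Since $j+1\le \ell$ and the edge is non-joining when $j+1<\ell$, both endpoints of $e_{j+1}$ lie in $K_i$. When $j+1=\ell$, the edge $e_\ell$ is joining with head in $K_i$, so its other endpoint lies in some $K_j\neq K_i$.
\item If it is a tree exchange arc with $e_{j+1}\in P_{T_i}(e_j)$, the fundamental path joins two vertices of $K_i$. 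When $j+1<\ell$, the path contains no $T_k$-joining edge, so every vertex on it stays in $K_i$ (consecutive path vertices are connected by non-joining edges), and hence $e_{j+1}$ is internal. When $j+1=\ell$, the Deepest Final Joining Edge property says $e_\ell$ is the first edge on the path leaving $V(K_i)$ from one side. So one endpoint of $e_\ell$ is in $K_i$ and the other is in a different component.
\end{itemize}
This gives the first two claims.

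\emph{Validity.} Indegrees behave as stated: each exchange $e_j$ replacing $e_{j+1}$ either has the same head (same-head arc), or the pair consists of a non-$T$ edge $e_j$ followed by a $T$ edge. Along the chain, every vertex other than $s$ loses one incoming $T$-edge and gains one. The accounting follows the label shift $(\alpha_1,\ldots,\alpha_r)\mapsto(\alpha_2,\ldots,\alpha_r,k)$: the net change in $T$ is $+e_1$, and edges only change labels. Hence $\deg^-(s)$ rises by one and $\deg^-(a)$ stays $0$.

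Acyclicity of each $T_i$ for $i<k$ is where \Cref{lem:matroid_exchange} is used. The exchanges in $T_i$ are exactly the pairs $(e_j,e_{j+1})$ given by tree exchange arcs with label $i$. The Treewise Shortcut-Free condition gives $f_{j'}\notin P_{T_i}(e_j)$ for later pairs, which is precisely the hypothesis of the lemma. Since each $T_i$ keeps $n-1$ edges and is acyclic, it remains a spanning tree.

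For $T_k$, we apply the lemma to the internal exchanges, which all lie within $K_i$. The final edge $e_\ell$ then connects two different components of the resulting forest, because it is joining and the internal exchanges did not alter component vertex sets. This merges $K_i$ with $K_j$. If $e_\ell$ came from some $T_{i'}$ with $i'<k$, its removal is compensated by $e_{\ell-1}$ via the lemma applied to $T_{i'}$.

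\emph{Main obstacle.} The delicate step is the subtle interaction of same-head arcs with the ordering hypothesis of \Cref{lem:matroid_exchange}, which matters when edges relabel across different trees. We must check that an edge entering $T_i$ via a chain step is not later used as a path edge in a way that breaks the ordering. We handle this by noting that the fundamental paths are taken in the old $T_i$, so the lemma is applied to the old forest with all pairs simultaneously.

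Finally, the merged component has exactly one deficient vertex: $s$ is no longer deficient, the deficient vertex of $K_j$ remains, and none remains if $a\in K_j$.
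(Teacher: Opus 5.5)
Your overall architecture matches the paper's: induction along $P$ for locality, the Treewise Shortcut-Free property feeding \Cref{lem:matroid_exchange} forest by forest, preservation of the $T_k$-partition before $e_\ell$ is inserted, and a same-head pairing for indegrees. However, the one step that needs a real argument is asserted rather than proved. The Deepest Final Joining Edge condition does not say that $e_\ell$ is the first edge leaving $V(K_i)$ from one side. It says that $e_\ell$ has maximum depth, in $T_h$ rooted at $a$, among the $T_k$-joining edges of $P_{T_h}(e_{\ell-1})$, and you must derive locality from that. Split $P_{T_h}(e_{\ell-1})$ at the LCA of the endpoints of $e_{\ell-1}$. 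Along each branch, edge depths strictly decrease toward the LCA, so the deepest joining edge is the joining edge nearest to one endpoint on its branch. All edges before it on that branch are non-joining, so its near endpoint lies in $V(K_i)$. Without this derivation, your claim that a final edge reached by a tree exchange arc is incident to $V(K_i)$ is unsupported. You should also record that $h<k$ here, since edges of $T_k$ are never $T_k$-joining. And you need to handle $\ell=1$ separately, where $e_1$ is itself the joining edge with head $s$ and No Premature Joining Edge gives nothing. By contrast, the obstacle you single out is not one: same-head arcs induce no forest exchange at all, and \Cref{lem:matroid_exchange} is already formulated on the old forest.

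Your indegree paragraph reaches the right conclusion, but the stated reason is wrong. The claim that the net change in $T$ is $+e_1$ is false as a statement about edge sets. For every same-head arc $(e_j,e_{j+1})$, $e_j$ receives the label $\alpha_{j+1}=\emptyset$ and leaves $T$, while $e_{j+1}$ enters it. Your case split also omits tree exchange arcs between two $T$-edges. The correct accounting, as in the paper, has three parts. An edge of $T$ leaves $T$ exactly when the next arc of $P$ is a same-head arc, and then its successor enters $T$ with the same head. Every non-$T$ edge of $P$ other than $e_1$ is entered by such an arc, so $e_1$ is the only unpaired addition. All remaining steps merely move edges between forests. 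Similarly, your claim that the internal $T_k$-exchanges do not alter component vertex sets needs a counting step. Each exchange swaps an edge of $K_i$ for another edge inside $V(K_i)$. So after \Cref{lem:matroid_exchange}, the $T_k$-edges on $V(K_i)$ form an acyclic graph with $|V(K_i)|-1$ edges, i.e.\ still a spanning tree of $V(K_i)$.
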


The next fact is what makes the round-robin step possible. If we search only from active components and mark both endpoints of a successful merge as inactive, then all augmenting paths found in one round can be applied simultaneously. This is again a static fact from Gabow.

\begin{restatable}[Parallel Compatibility of Component-Local Augmenting Paths~\cite{GABOW1995259}]{lemma}{AugmentingNoConflict}
	\label{lem:no_conflict_augment_path}
	Assume that all the augmenting paths $P_{1}, P_{2},\cdots, P_t$ produced in \cref{line:search_again} during one round are component-local augmenting paths and are selected following the active/inactive marking strategy in \cref{alg:round_robin}.
	Then, after augmenting along all these paths on \cref{line:augment_all}, $T_1, T_2, \dots, T_k$ remains a valid $k$-intersection, and $T_1, \dots, T_{k-1}$ remain spanning trees.
\end{restatable}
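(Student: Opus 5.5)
The plan is to reduce the simultaneous augmentation to a collection of independent single-component augmentations, and to handle each forest $T_i$ by \cref{lem:matroid_exchange}. First, by \cref{lem:augment_path_structure}, each path $P_\ell$ starting in component $K_{i_\ell}$ uses only edges with both endpoints in $V(K_{i_\ell})$, except its final joining edge $f_\ell$. The marking strategy gives two facts. Every source component $K_{i_\ell}$ is distinct, and it is also distinct from every target component $K_{j_{\ell'}}$. Indeed, once $K_i$ is the target of some path it is inactive, so no later search starts from it; and once it has been a source it is inactive, so it is never chosen as a target of a later path's merge. I would argue the latter carefully: the search from an active component must end at a joining edge whose other side is still active. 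If that is not literally how \srch{} works, the "inactive" marking must at least be interpreted so that targets are distinct from sources. Hence the vertex sets $V(K_{i_\ell})$ are pairwise disjoint, and the internal edge sets used by different paths are disjoint.

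Next, the indegree bookkeeping. Each augmentation increases $\deg^-_T(s_\ell)$ by one and leaves every other vertex's indegree unchanged. This is a local statement about the head of each exchanged pair: same-head arcs preserve heads, and tree exchanges replace an edge within a tree without altering the edge multiset's heads beyond the chain. Since the chains are disjoint, these effects add up. So afterward every vertex has indegree $k-1$ or $k$, and $\deg^-_T(a)=0$ is preserved because no edge into $a$ enters $T$; $e_1\in\rho_G(s)$ with $s\ne a$.

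The core is acyclicity of each $T_i$ after all exchanges. For a fixed $i<k$, the tree-exchange pairs $(e,f)$ applied to $T_i$ come from several paths. Within one path, order the pairs along the path. For \cref{lem:matroid_exchange}, I then need, for pairs $(e,f)$ earlier and $(e',f')$ later, that $f'\notin P_{T_i}(e)$. Within a path this is exactly the Treewise Shortcut-Free property. Across paths $P_\ell,P_{\ell'}$, I must show that no exchanged edge of $P_{\ell'}$ lies on a fundamental path $P_{T_i}(e)$ used by $P_\ell$. This is the main obstacle, since $P_{T_i}(e)$ may wander outside $K_{i_\ell}$ through the spanning tree $T_i$.

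Here I would use No Premature Joining Edge and Deepest Final Joining Edge. For non-final exchanges, $P_{T_i}(e)$ contains no $T_k$-joining edge, so all its edges lie inside $K_{i_\ell}$, and it is disjoint from the edges of other paths. For the final exchange $(e,f_\ell)$ with $f_\ell$ joining, the relevant portion is bounded by the deepest-joining choice. Take the path segment from an endpoint of $e$ up to $f_\ell$. Depth-maximality means the segment stays inside $V(K_{i_\ell})$ before $f_\ell$. I would order the pairs so that final exchanges come last, and within those, by decreasing depth of $f_\ell$. Then a later deleted $f_{\ell'}$ cannot lie on an earlier $P_{T_i}(e)$: it would either be a shallower joining edge on that fundamental path, or lie inside another component's interior, both excluded.

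For $T_k$, I would argue directly: the internal edges of each source component are rearranged as a forest, by the single-path case of \cref{lem:augment_path_structure}. Adding the joining edges $f_\ell$ forms a graph on the contracted components in which each source component contributes exactly one out-edge to a distinct non-source target. This is a forest of stars, hence acyclic. Finally, $T_1,\dots,T_{k-1}$ keep $n-1$ edges while staying forests, so they remain spanning trees.
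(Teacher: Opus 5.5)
There are two genuine gaps. The more serious one is in the step you yourself called the main obstacle. Your order for the final exchanges is backwards.

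\cref{lem:matroid_exchange} needs that an edge deleted by a \emph{later} exchange does not lie on the fundamental path of an \emph{earlier} one. You order the final exchanges by decreasing depth and say a later deleted edge ``would be a shallower joining edge on that fundamental path,'' which you treat as excluded. It is not excluded. The Deepest Final Joining Edge property only forbids $T_k$-joining edges on $P_{T_i}(e)$ that are \emph{deeper} than the chosen $f_\ell$, or equally deep with a smaller identifier. Shallower joining edges commonly lie on $P_{T_i}(e)$, either above $f_\ell$ or on the other branch through the LCA. Your own remark that depth-maximality only controls the segment up to $f_\ell$ already shows this.

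For example, let $T_i$ contain the path $a=y_0,y_1,\dots,y_5$, with tree edge $g_j$ between $y_{j-1}$ and $y_j$. Let $e=(y_1,y_5)$ be the final predecessor in $K_\ell$, and put $y_2,y_3,y_4$ in three other components. Then $f_\ell=g_5$. Yet $g_3$ may be the final joining edge chosen by the source component containing $y_3$. In your order, $(e,g_5)$ comes before the exchange deleting $g_3\in P_{T_i}(e)$, so the hypothesis of \cref{lem:matroid_exchange} fails.

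The fix is to order final exchanges by \emph{increasing} depth, breaking ties by \emph{decreasing} identifier. Then a later deleted joining edge on an earlier fundamental path would be deeper, or equally deep with a smaller identifier. That contradicts the Deepest Final Joining Edge choice.

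The second gap is your claim that a source component is never the target of a later path. This does not hold for \cref{alg:round_robin}. The search stops at the first $T_k$-joining edge it meets and never checks whether the component on the other side is still active. So after a path from $K_1$ to $K_2$ is selected, a later search from an active $K_3$ may end at an edge into $K_1$. Reinterpreting the marking to forbid this would change the algorithm the lemma is about.

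As a result, the contracted graph $H$ of joining edges need not be a forest of stars; chains such as $K_3\to K_1\to K_2$ can occur. Your acyclicity argument for $T_k$ therefore does not go through. Neither does the implicit claim that each merged component keeps exactly one deficient vertex.

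The argument that works is the paper's. When $P^\ell$ is selected, its source $K^\ell$ is active, so it is not an endpoint of any earlier joining edge. Adding the joining edges in selection order therefore never closes a cycle in $H$. A tree of $H$ on $m$ components has $m-1$ edges, coming from $m-1$ distinct sources. So it has exactly one non-starting component, which supplies the unique deficient vertex (or contains $a$).

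The rest of your outline matches the paper: non-final fundamental paths lie inside $K^\ell$, final exchanges are placed last, and the indegree bookkeeping is the same.
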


For completeness, we provide the proofs of \cref{lem:augment_path_structure} and \cref{lem:no_conflict_augment_path} in \Cref{sec:appendixA}.

The above lemmas show that augmenting paths can improve the solution, while the following lemma implies that if no augmenting path exists, then the current solution is already optimal.

\begin{lemma}[Reachability to Cut Certificate]
	\label{lem:reachability_cut_certificate}
	Let $s$ be the deficient vertex of a non-root $T_k$-component $K$.
	Let $L\subseteq E$ be the set of edge-vertices reachable from $s$ in the auxiliary graph $D(T)$.
	If no edge in $L$ is $T_k$-joining, then, with $S \coloneqq \{s\}\cup V(L),$ the set $\rho_G(S)$ is a rooted cut of size at most $k-1$.
\end{lemma}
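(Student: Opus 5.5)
The plan is to show three things: $S$ avoids $a$, every edge of $E\setminus T$ entering $S$ already lies inside $S$, and exactly $k-1$ edges of $T$ enter $S$. Together these give $\rho_G(S)\subseteq T$ with $|\rho_G(S)|=k-1$. If $L=\emptyset$, then $S=\{s\}$. The deficit-source arcs force every edge of $\rho_G(s)$ to lie in $T$, so $|\rho_G(s)|=\deg_T^-(s)=k-1$ and $s\neq a$. From now on assume $L\neq\emptyset$.

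\textbf{Step 1: $(S,L)$ is connected.} I would induct along a path $s,e_1,\dots,e_r$ in $D(T)$ and show that each $e_j$ is connected to $s$ using edges of $\{e_1,\dots,e_j\}$. The edge $e_1$ has head $s$. A same-head arc $(e,f)$ keeps the head shared. A tree exchange arc $(f,e)$ with $e\in P_{T_i}(f)$ puts $e$ on a $T_i$-path between the endpoints of $f$. In the last case I will also use that the whole path $P_{T_i}(f)$ lies in $L$, which is Step 2.

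\textbf{Step 2: each $T_i\cap L$ spans $S$ as a tree.} First I check that $P_{T_i}(g)\subseteq L$ for every $g\in L$ and every $i\in[k]$.
\begin{itemize}
\item If $g\in T_i$, then $P_{T_i}(g)=\{g\}$.
\item If $g\notin T_i$ and $i<k$, then $T_i$ is spanning, so the tree exchange arcs put all of $P_{T_i}(g)$ into $L$.
\item If $i=k$, then $g$ is not $T_k$-joining by hypothesis, so $P_{T_k}(g)$ exists and the same argument applies.
\end{itemize}
So every edge of $L$ can be replaced by a $T_i$-path inside $L$. Combined with Step 1, $T_i\cap L$ connects all of $S$. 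Since $T_i$ is a forest, $T_i\cap L$ is a spanning tree of $S$.

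Two consequences follow. For $i=k$, $S$ lies inside a single $T_k$-component, which must be $K$ because $s\in S$. Hence $a\notin S$, and $s$ is the only deficient vertex in $S$. Also, $T$ has at least $k(|S|-1)$ edges with both endpoints in $S$, all of them in $L$.

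\textbf{Step 3: counting in-degrees.} Every $v\in S\setminus\{s\}$ has $\deg_T^-(v)=k$, and $\deg_T^-(s)=k-1$. So exactly $k|S|-1$ edges of $T$ have head in $S$.

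\begin{itemize}
\item \emph{Edges of $T$ entering $S$.} At least $k(|S|-1)$ of these $k|S|-1$ edges lie inside $S$. Hence at most $k-1$ edges of $T$ enter $S$.
\item \emph{Every $y\in S\setminus\{s\}$ is the head of an edge in $T\cap L$.} Suppose not for some $y$. Then the at least $k(|S|-1)$ edges of $T\cap L$ all have heads in $S\setminus\{y\}$. The total $T$-indegree of $S\setminus\{y\}$ is only $k(|S|-1)-1$, a contradiction.
\item \emph{Edges of $E\setminus T$ entering $S$.} Let $e=(x,y)\notin T$ with $y\in S$. If $y=s$, a deficit-source arc gives $e\in L$. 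Otherwise a same-head arc from an edge of $T\cap L$ with head $y$ gives $e\in L$. In both cases $x\in V(L)\subseteq S$, so $e$ does not enter $S$.
\end{itemize}

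Therefore $\rho_G(S)\subseteq T$ and $|\rho_G(S)|\le k-1$. Since $a\notin S$ and $S\neq\emptyset$, $\rho_G(S)$ is a rooted cut.

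The main obstacle is this last point: showing that every non-tree edge entering $S$ is reached. The auxiliary graph only has same-head arcs out of \emph{tree} edges in $L$, so I must certify that every vertex of $S$ other than $s$ is the head of some reached tree edge. The counting in the second item of Step 3 is what I expect to supply this.
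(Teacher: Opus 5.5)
Your proof is correct and follows essentially the same route as the paper: connectivity of $(S,L)$, each $L\cap T_i$ spanning $S$, the in-degree count showing that every vertex of $S\setminus\{s\}$ is the head of a reached tree edge, and then closure of the non-tree in-edges together with the bound $|\rho_T(S)|\le k-1$. The differences are cosmetic. You derive $S\subseteq V(K)$ (and hence $a\notin S$ and that $s$ is the only deficient vertex in $S$) from the fact that $L\cap T_k$ spans $S$, whereas the paper proves it first via a crossing-edge argument. You also work with $|T[S]|\ge k(|S|-1)$ rather than the paper's equality.
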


\begin{proof}
	First, we show $S\subseteq V(K)$. Indeed, the set $L$ together with the vertex $s$ induces a connected subgraph on $S$: deficit-source arcs add edges entering $s$, same-head exchange arcs add edges sharing the same head as an already reached tree edge, and tree exchange arcs add fundamental paths between endpoints already reached. If $S$ contained a vertex outside $K$, then some edge of this connected subgraph would cross from $K$ to another $T_k$-component, hence would be $T_k$-joining, contradicting the assumption.

	For each $i\in[k]$, let $L_i\coloneqq L\cap T_i$. We claim that $L_i$ is a tree spanning $S$. Suppose $L_i$ were disconnected on $S$. Since $L$ connects $S$, there is an edge $f\in L\setminus T_i$ whose endpoints lie in two different components of $L_i$. The fundamental path $P_{T_i}(f)$ is well-defined: for $i<k$, $T_i$ is a spanning tree, while for $i=k$, both endpoints of $f$ lie in the same $T_k$-component $K$. By the tree exchange arcs of $D(T)$, every edge of $P_{T_i}(f)$ is reachable from $s$, and hence belongs to $L_i$. This connects the two components of $L_i$, a contradiction. Thus $L_i$ is connected on $S$, and since $L_i\subseteq T_i$, it is a tree spanning $S$.

	Therefore $T_i[S]=L_i$ for every $i\in[k]$, and so
	\[
		|T[S]|=k(|S|-1).
	\]
	Since $s$ is the unique deficient vertex in $K$ and $S\subseteq V(K)$,
	\[
		\deg_T^-(s)=k-1,
		\qquad
		\deg_T^-(v)=k \quad \forall v\in S\setminus\{s\}.
	\]
	Moreover, every vertex $v\in S\setminus\{s\}$ is the head of at least one reachable tree edge. Otherwise the number of reachable tree edges would be at most
	\[
		(k-1)+k(|S|-2)<k(|S|-1),
	\]
	contradicting $|T[S]|=k(|S|-1)$.

	Now consider any non-tree edge $g\in \rho_{E\setminus T}(S)$, and write $g=(u,v)$ with $v\in S$ and $u\notin S$. If $v=s$, then the deficit-source arc $(s,g)$ makes $g$ reachable from $s$. If $v\neq s$, choose a reachable tree edge $f\in \rho_G(v)\cap T$; then the same-head exchange arc $(f,g)$ makes $g$ reachable from $s$. In both cases $g\in L$, contradicting $u\notin S$. Hence
	\[
		\rho_{E\setminus T}(S)=\emptyset.
	\]
	It follows that
	\[
		\begin{aligned}
			|\rho_G(S)|
			 & =|\rho_T(S)|                        \\
			 & = \sum_{v\in S}\deg_T^-(v)-|T[S]|   \\
			 & = \big(k(|S|-1)+(k-1)\big)-k(|S|-1) \\
			 & = k-1.
		\end{aligned}
	\]
	Since $S\subseteq V(K)$ and $K$ does not contain $a$, we have $a\notin S$. Thus $\rho_G(S)$ is a rooted cut of size at most $k-1$.
\end{proof}

Therefore, it suffices for $\srch / \rsrch$ to maintain reachability in the auxiliary graph $D(T)$. We now turn to the only new issue in the incremental setting.

\subsection{What the Search Procedures Must Provide}
\label{sec:search_interface}

In the static setting, if the search from a deficient vertex fails, then the current graph already has a rooted cut of size less than $k$, so the algorithm may terminate permanently. In the incremental setting, this conclusion is valid only for the \emph{current} graph. A future edge insertion may create a new augmenting path. Therefore, a failed search should not be discarded; it should be viewed as a paused search state that will be resumed later.

Accordingly, the role of the procedures $\srch$ and $\rsrch$ is the following. The procedure $\srch(s)$ starts a search from a deficient vertex $s$. If it succeeds, it returns a component-local augmenting path. If it fails, it returns $\bot$ together with a set of explored edge-vertices certifying a rooted cut of the current graph of size less than $k$. The procedure $\rsrch(s,e)$ takes such a paused search and a newly inserted edge $e$, and resumes the search from the newly reachable states created by that insertion.

For the round-robin framework, we only need the following abstract guarantee.

\begin{lemma}[Correctness of the $\srch$ and $\rsrch$ procedures]
	\label{lem:search_resume_correctness}
	There exist efficient algorithms $\srch$ and $\rsrch$ such that:
	\begin{itemize}
		\item If $\srch(s)$ or $\rsrch(s,e)$ returns $\bot$, then no $T_k$-joining edge is reachable from $s$. Moreover, the algorithm has already explored all edge-vertices reachable from $s$ in the auxiliary graph $D(T)$.

		\item Otherwise, $\srch(s)$ or $\rsrch(s,e)$ returns an augmenting path $P$ starting from the deficient vertex $s$. Moreover, $\srch(s)$ always returns a component-local augmenting path (\cref{def:good_aug_path}).
	\end{itemize}
\end{lemma}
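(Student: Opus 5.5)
The lemma only asks that \emph{some} procedures with these guarantees exist. I would prove it by giving a concrete, possibly slow, implementation based on reachability in $D(T)$. Efficiency is handled separately by the cyclic-scanning implementation.

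\textbf{The procedure $\srch(s)$.} It runs a breadth-first search in $D(T)$ from $s$. Arcs out of an edge-vertex are generated on demand from the four arc types: deficit-source arcs at $s$, same-head exchange arcs, tree exchange arcs via fundamental paths $P_{T_i}(f)$, and joining-sink arcs. The search stops as soon as it reaches a $T_k$-joining edge-vertex.
\begin{itemize}
\item If no joining edge is reached, the BFS has explored exactly the edge-vertices reachable from $s$. This gives the first bullet for $\srch$, and \cref{lem:reachability_cut_certificate} turns the explored set into the certificate.
\item If a joining edge is reached, the BFS tree yields a shortest augmenting path $P=(s,e_1,\dots,e_\ell,t)$.
\end{itemize}

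\textbf{Normalizing $P$ into a component-local path.} I would apply the normalization described after \cref{def:good_aug_path}, in this order.
\begin{enumerate}
\item Truncate $P$ at the first edge-vertex that is $T_k$-joining. A shortest path has no earlier joining edge, since otherwise a shorter path to $t$ would exist.
\item Shortcut-freeness follows from minimality. If $e_{j'}\in P_{T_i}(e_j)$ for some $j'\ge j+2$, then the arc $(e_j,e_{j'})$ exists in $D(T)$. It would give a strictly shorter path.
\item For the same reason, a joining edge inside $P_{T_i}(e_j)$ with $j+1<\ell$ would give a shorter path to $t$. So the second clause of No Premature Joining Edge holds.
\item Finally, replace the last exchange target $e_\ell$ by the deepest $T_k$-joining edge of $P_{T_{i}}(e_{\ell-1})$, breaking ties by identifier. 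This edge is also an out-neighbor of $e_{\ell-1}$ and also has an arc to $t$.
\end{enumerate}
The replacement in step 4 only changes the last edge, and the length is unchanged. So the shortcut-free conditions for arcs before the last one still hold: they concern only $e_{j+2},\dots,e_{\ell}$, and any new $e_\ell$ lying in some $P_{T_i}(e_j)$ with $j\le \ell-2$ would again contradict minimality. The subtle case is when the last arc is a same-head exchange arc. Then there is no final tree exchange, and the Deepest condition is vacuous.

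\textbf{The procedure $\rsrch(s,e)$.} It keeps the explored set $L$ and the BFS queue from the paused search. After inserting $e=(x,y)$, the graph $D(T)$ only gains the vertex $e$ and arcs incident to it:
\begin{itemize}
\item $(s,e)$, if $y=s$;
\item $(f,e)$, for reached tree edges $f$ with head $y$;
\item $(e,g)$, for $g\in P_{T_i}(e)$;
\item $(e,t)$, if $e$ is joining.
\end{itemize}
No existing arc disappears, because $T$ is unchanged and fundamental paths of existing edges are unchanged. I would check whether $e$ has an in-neighbor in $L\cup\{s\}$. If so, mark $e$ reached and continue the BFS from it until the search stops.

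By induction on the insertions, every vertex reachable in the new $D(T)$ either lies in the old reachable set or is reachable through $e$. Hence on a $\bot$ return the explored set equals the full reachable set. On success, following parent pointers gives an augmenting path from $s$; no component-locality is claimed here, which matches the statement.

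\textbf{Main obstacle.} The delicate step is showing that the shortest-path argument really gives the Deepest Final Joining Edge property together with the other two conditions simultaneously. In particular, swapping in a deeper joining edge must not break shortcut-freeness for earlier arcs. If the minimality argument above has a gap, I would instead add an explicit normalization loop: repeatedly shortcut, truncate, and re-choose the deepest edge. This loop terminates because the path length strictly decreases or stays the same while the final edge's depth increases.
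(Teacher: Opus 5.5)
Your proposal is correct and is essentially the paper's brute-force realization in \cref{thm:bruteforce_search_resume}. Both use BFS in $D(T)$, derive the Treewise Shortcut-Free and No Premature Joining Edge properties from shortest-path minimality, swap in the deepest final joining edge without changing the path length, and resume only from the new edge-vertex $e$ when it has an explored in-neighbor. You check more carefully than the paper why the final-edge swap preserves the other two conditions, and the paper's second realization via cyclic scanning (\cref{thm:cyclic_search_resume}) is needed only for the $O(m)$-per-round bound, which you rightly defer.
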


We will realize this guarantee twice. First, in \Cref{sec:brute_augment}, we give a conceptually simple brute-force implementation. Then, in \Cref{sec:cyclic_scanning}, we give the linear-time cyclic-scanning implementation used in the final algorithm.

\subsection{Warm-up: Brute-Force Pause/Resume Search}
\label{sec:brute_augment}

We first give a brute-force implementation of $\srch$ and $\rsrch$, which finally yields an incremental directed mincut algorithm with $O(mnk)$ total update time. This is not the final implementation, but it makes the pause/resume viewpoint completely explicit.

The algorithm is simply a BFS in the auxiliary graph. It takes $O(mnk)$ time because the auxiliary graph has $O(m)$ vertices, and each edge-vertex may have out-degree as large as $O(nk)$, since it may have tree exchange arcs to the edge-vertices corresponding to all edges in $P_{T_i}(e)$ for every $i \in [k]$.

In \Cref{sec:appendixB}, we compress the auxiliary graph to $O(mk\log n)$ arcs, yielding a algorithm with $O(mk^2\log^2 n)$ total update time that is still very simple.
Although this is still sub-optimal compared with the result in \Cref{sec:cyclic_scanning}, the idea is straightforward and may be easier to extend to weighted auxiliary graphs, which we hope may have further applications.

\begin{algorithm}[H]
	\caption{Brute-Force $\srch(s)$ Implementation}\label{alg:brute_reach}
	\KwGlobalVar{Graph $G=(V,E)$, a realization $T=(T_1,\cdots,T_k)$, the auxiliary graph $D(T)$;}
	\KwData{A deficient vertex $s$ in $T$\;}
	\KwResult{A component-local augmenting path $P$ from $s$ to a $T_k$-joining edge\;}

	Perform $\textsc{BFS}$ on $D(T)$ starting from $s$\;
	\Return a shortest augmenting path from $s$ to $t$, or $\bot$ if $t$ is not reachable\;
\end{algorithm}

To recover the augmenting path, we store a parent pointer at each explored vertex.
Without loss of generality, we may assume that the returned path is a component-local augmenting path. Otherwise, it suffices to replace its final edge $f$ by the deepest $T_k$-joining edge in $P_{T_i}(e)$, where $e$ is the predecessor of $f$. The resulting path remains valid and satisfies the Deepest Final Joining Edge condition in \cref{def:good_aug_path}.

We then explain how to resume this process after an incremental change.

\begin{algorithm}[H]
	\caption{Brute-Force $\rsrch(s,e)$ Implementation}
	\label{alg:brute_resume_reach}
	\KwGlobalVar{Graph $G=(V,E)$, a realization $T=(T_1,\cdots,T_k)$;}
	\KwData{An inserted edge $e\coloneqq (u,v)$\;}
	\KwResult{An augmenting path $P$ from $s$ to a $T_k$-joining edge\;}
	Update the auxiliary graph:
	$$
		\begin{aligned}
			V(D(T)) & \gets V(D(T))\cup \{e\}                                                                            \\
			A(D(T)) & \gets A(D(T))\cup \mathbf 1[e\text{ is not }T_k\text{-joining}]\{(e,f) : f \in P_{T_i}(e), i \in [k]\} \\
            &\quad\cup \{(f,e) : f \in \rho_G(v) \cap  T\} \\
            &\quad \cup \mathbf 1[\mathrm{head}(e)\text{ is deficient}]\{(\mathrm{head}(e),e)\}\\
&\quad\cup \mathbf 1[e\text{ is }T_k\text{-joining}]\{(e,t)\};
		\end{aligned}
	$$

	\If{$e$ has a predecessor in $D(T)$ explored in previous $\srch/\rsrch$}{
		Resume a reachability search on $D(T)$ starting from $e$\;
		\Return an augmenting path from $s$ to a $T_k$-joining edge-vertex $f$, or $\bot$ if no such $f$ is reachable\;
	}

	\Return $\bot$\;
\end{algorithm}

When a new edge \(e\) arrives, we first check whether any predecessor of its edge-vertex in the auxiliary graph has already been explored. If so, the edge-vertex \(e\) is reachable from \(s\) as well. We then resume the reachability search from \(e\), without revisiting vertices explored in previous \(\srch\)/\(\rsrch\) calls.

\begin{theorem}[Brute-force implementation of $\srch/\rsrch$]
	\label{thm:bruteforce_search_resume}
	The procedures $\srch$ and $\rsrch$ can be implemented by explicit BFS on the auxiliary graph so that they satisfy \cref{lem:search_resume_correctness}.

	Moreover, for a fixed value of $k$, the total time spent on all calls to $\srch$ and $\rsrch$ during one iteration of the \texttt{while}-loop in \cref{line:while_non_full} of \cref{alg:round_robin} (i.e. one round) is $O(mnk)$.
\end{theorem}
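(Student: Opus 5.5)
We split the proof into a correctness part, which checks the two bullets of \cref{lem:search_resume_correctness}, and a running-time part, which bounds one round by $O(mnk)$. The key observation for correctness is monotonicity. Between calls inside a round, the realization $T$ is unchanged: augmentations happen only on \cref{line:augment_all}. The only change to $D(T)$ within a round is the insertion of a new edge-vertex $e$ together with its incident arcs, exactly as listed in \cref{alg:brute_resume_reach}. Inserting $e$ does not alter $T$, the $T_k$-components, or any fundamental path, so every arc present before the insertion is still present. Hence the set of vertices reachable from $s$ can only grow, and any previously explored state stays valid.

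For the first bullet, we argue by induction over the sequence of calls for a fixed $s$. We maintain the invariant that, after each call returning $\bot$, the explored set equals the set of vertices of $D(T)$ reachable from $s$, and it contains no $T_k$-joining edge-vertex. For $\srch$, this is the usual completeness of BFS. For $\rsrch(s,e)$, any newly reachable vertex has some reachable path from $s$, and that path must use the new vertex $e$: the arcs added at this insertion are all incident to $e$, and the old reachable set was closed under old arcs. So the new vertex is reachable if and only if some in-neighbour of $e$ was explored. Deficit-source arcs from $\mathrm{head}(e)$ count here, since $s$ itself is explored. Resuming the search from $e$ therefore explores exactly the new reachable set.

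For the second bullet, parent pointers yield a directed path from $s$ to $t$, which by \cref{def:aug_path} is an augmenting path. For $\srch$, we argue that a BFS shortest path is treewise shortcut-free and has no premature joining edge. Any violation would give an arc skipping ahead along the path, and hence a strictly shorter $s$--$t$ path, contradicting minimality. The same reasoning applies to the second clause of No Premature Joining Edge: a joining edge on $P_{T_i}(e_j)$ gives an arc to an edge-vertex adjacent to $t$, which is shorter. Replacing the final edge by the deepest joining edge in $P_{T_i}(e)$, as the paper remarks, keeps the path length and all other arcs, so it adds the Deepest Final Joining Edge property without breaking the first two.

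For the running time, $D(T)$ has $O(m)$ vertices. Each edge-vertex has at most $O(nk)$ tree exchange out-arcs, since each of the $k$ fundamental paths has at most $n-1$ edges. Same-head arcs contribute $O(\sum_v \deg^-(v)\cdot k)=O(mk)$ in total, because each vertex has at most $k$ incoming tree edges. So $|A(D(T))|=O(mnk)$. The main obstacle is charging across many searches in one round. The plan is to use \cref{lem:augment_path_structure}: a search from $s\in K_i$ that has not yet hit a joining edge explores only edge-vertices with both endpoints in $V(K_i)$, together with at most the final joining ones. We charge each explored edge-vertex and its out-arcs to the component $K_i$. Components are disjoint, and each component is searched by one paused search plus at most one re-run on \cref{line:search_again}. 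Resumed searches never revisit explored vertices, so each vertex is scanned $O(1)$ times per round, and the total work is $O(mnk)$. Generating the arcs of a newly inserted edge costs $O(nk)$ per insertion, which is covered by charging it to that edge.
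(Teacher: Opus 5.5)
Your proposal is correct and matches the paper's proof. Like the paper, you use monotonicity of $D(T)$ under insertions for the resume step, BFS shortest paths for the Treewise Shortcut-Free and No Premature Joining Edge properties, the final-edge swap for the Deepest Final Joining Edge property, and the $O(mnk)$ size of $D(T)$ for the running time; your per-round charging (searches from different $T_k$-components process disjoint sets of edge-vertices, each at most twice) is more explicit than the paper's one-line bound, though the locality it needs concerns the whole explored region and comes from a direct induction over a BFS that stops at the first discovered joining edge (as in the first paragraph of the proof of \cref{lem:reachability_cut_certificate}), not from \cref{lem:augment_path_structure}.
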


\begin{proof}
	\textbf{Reachability.}
		$\srch(s)$ performs BFS in the current auxiliary graph $D(T)$ starting from $s$. Therefore, if it returns $\bot$, the set of explored edge-vertices is exactly the set of edge-vertices reachable from $s$, and no $T_k$-joining edge is reachable.

		Now consider $\rsrch(s,e)$ after a new edge $e=(u,v)$ is inserted. The only new auxiliary vertex is the edge-vertex $e$, and every new arc is incident to it. Thus, if none of its predecessors in $D(T)$ has already been explored, the reachable region from $s$ does not grow. Otherwise, $\rsrch$ resumes the reachability search from $e$ and never revisits previously explored vertices. Hence it explores exactly the newly reachable part of $D(T)$.

	Consequently, whenever $\rsrch(s,e)$ returns $\bot$, the maintained explored set is the set of edge-vertices reachable from $s$ in $D(T)$, and no $T_k$-joining edge is reachable.

	\textbf{Component-local augmenting path.} We now show that whenever $\srch$ returns an augmenting path, it is component-local. Because $\srch$ uses BFS, it finds a shortest augmenting path (and modifying the final edge does not change its length). This guarantees the Treewise Shortcut-Free and No Premature Joining Edge properties.
	The final modification step guarantees the Deepest Final Joining Edge property.

	\textbf{Runtime.} The time is bounded by the size of the auxiliary graph $D(T)$, which is $O(mnk)$.
\end{proof}

\subsection{Incremental Cyclic Scanning}
\label{sec:cyclic_scanning}

We now return to Gabow's actual search strategy and show that the same pause/resume idea can be implemented in linear total time per round within a fixed stage.

We first recall the cyclic scanning algorithm for static $k$-intersection from \cite{GABOW1995259} and then modify it for the incremental setting. We denote $L_i$ as the set containing the deficient vertex $s$ and all endpoints of explored edges in $T_i$ during the search process. We also write $T_{((i-1)\bmod k) + 1}, L_{((i-1)\bmod k)+1}$ as $T_i,L_i$ for short in this section.

The key insight of the cyclic scanning algorithm is that, for an edge $e \in T_{i-1}$, we do not need to scan all successors $f \in P_{T_j}(e)$ for every $j \in [k]$. Instead, it suffices to scan only the edges  $P_{T_{i}}(e)$ in the next layer. The reason is that $L_i$ then inherits the connectivity information carried by $e$, and propagates it further to $L_{i+1}, L_{i+2}, \cdots$. Indeed, $P_{T_{i+1}}(e)\subseteq \bigcup_{f \in P_{T_{i}}(e)} P_{T_{i+1}}(f)$. Intuitively, $P_{T_{i}}(e)$ is the $T_i$-path between both endpoints of $e$ (note that $P_{T_{i}}(e)$ exists because otherwise we find a joining edge), so $\bigcup_{f \in P_{T_{i}}(e)} P_{T_{i+1}}(f)$ is a connected subtree of $T_{i+1}$ that contains both endpoints of $e$, so it must contain $P_{T_{i+1}}(e)$, which is the minimal subtree satisfying this. In other words, every edge in $P_{T_{i+1}}(e)$ that is directly reachable from $e$ is now reached indirectly through some edge in $P_{T_{i}}(e)$.

The entire search process works in \emph{phases}. We maintain a queue $Q$ of explored edges and an index $i$, which indicates that the current phase is scanning exchanges into $T_i$. Initially, we set $i\gets 1$, explore all edges in $\rho_G(s)\setminus T$, and put them into $Q$. Whenever the edge popped from $Q$ lies in $T_i$, we advance to the next phase by increasing $i\gets i+1$.

At the beginning of the phase for $T_i$, we maintain the invariant that the explored edges in $T_i$ form a connected subtree $L_i$ containing the root of the search. Thus, for a popped edge $e=(u,v)$, if both endpoints already lie in $L_i$, then scanning $P_{T_i}(e)$ would reveal nothing new, and we may skip it. Otherwise, the invariant and scanning order guarantee that exactly one endpoint of $e$ lies in $L_i$, and the fundamental path $P_{T_i}(e)$ contains a unique suffix of unexplored edges leaving $L_i$. We scan this suffix in the order in which it leaves $L_i$, explore these newly discovered edges, and enqueue them. Whenever such a newly explored edge $f$ has head $w$, we also explore all unexplored edges in $\rho_G(w)\setminus T$ and enqueue them; these steps follow same-head exchange arcs in the auxiliary graph.

If during this process we encounter an edge joining $T_k$, then by tracing back the labels, we obtain an augmenting path. Otherwise, when the queue becomes empty, no augmenting path exists.

Each newly explored edge in $T_{i-1}$ is used only to explore edges in $T_{i}$, so reachability information propagates cyclically through $T_ 1, T_2,\dots, T_k$ rather than branching to all $k$ forests at once. This avoids an extra factor of $k$ in the running time. Moreover, thanks to the invariant and scanning order, one can efficiently find all unexplored edges on $P_{T_i}(e)$ without using advanced data structures; a discussion of this `find unexplored edges on $P_{T_i}(e)$' step is given in \cref{sec:scan_C}.

In the incremental setting, the only new question is whether these same invariants survive when the search is paused, a new edge is inserted, and the search later resumes. We will show that they do.

Formally, the pseudocode is shown below.

\begin{algorithm}[H]
	\caption{$\srch(s)$ Cyclic Scanning for $k$-intersection}\label{alg:reach}
	\KwGlobalVar{Graph $G=(V,E)$, a realization $T=(T_1,\cdots,T_k)$;}
	\KwData{A deficient vertex $s$ in $T$\;}
	\KwResult{A component-local augmenting path $P$ from $s$ to a $T_k$-joining edge\;}

	Let $Q\gets$ an empty queue, $i\gets 1$\;
	\For{$e \in \rho_G(s)\setminus T$}{
		$\expl(e,s)$ and return the path if it returns a path\;
	}
	\tcp{Main Loop,
		maintain $L_i\coloneqq \{s\} \cup  V( e: e \text{ is explored}, e\in  T_i)$
	}
	\While{$Q$ is not empty}{
		$e=(u,v) \gets Q.\mathrm{pop}()$\;
		\lIf{$e \in T_i$}{
			$i\gets i+1$\label{line:inc_i}
		}
		\lIf{both $u,v \in L_i$}{
			\Continue
		}
		\For{each unexplored $f \in P_{T_i}(e)$ (in the order of leaving $L_i$)}{\label{line:scan_C} \tcp{implemented by the data structure in \cref{thm:scan_C}}
			$\expl(f,e)$ and return the path if it returns a path\;

			Let $w \gets $ the head of $f$\;
			\If{edges in $\rho_G(w)\setminus T$ are not explored}{
				\For{$h : h \in \rho_G(w)\setminus T$}{
					$\expl(h,f)$ and return the path if it returns a path\;

				}
			}
		}
	}
	\Return $\bot$\;
\end{algorithm}

In the main loop, scanning $P_{T_i}(e)$ follows tree exchange arcs. After exploring a new edge $f$ this way, we scan the unexplored edges in $\rho_G(\mathrm{head}(f))\setminus T$; this follows same-head exchange arcs.

The $\expl$ procedure is defined as follows. Intuitively, $\expl(e,\pi)$ explores the edge-vertex $e$ and labels $\pi$ as its parent. If $e$ is $T_k$-joining, it also extracts an augmenting path from the search tree.

\begin{algorithm}[H]
	\caption{$\expl(e,\pi)$}\label{alg:label_and_push}
	\KwGlobalVar{The current search labels and queue $Q$;}
	\KwData{An edge $e$ and its predecessor $\pi$ in the search tree;}
	\KwResult{An augmenting path ending at $e$, or $\bot$;}
	Label $e$ with $\pi$\;
	\If{$e$ is $T_k$-joining}{
		\Return the path from $s$ to $e$ recovered using the labels (possibly after modifying the final joining edge to satisfy the Deepest Final Joining Edge condition)\;
	}
	$Q.\mathrm{push}(e)$\;
	\Return $\bot$\;
\end{algorithm}

We then explain how to resume this process after an incremental change.

\begin{algorithm}[H]
	\caption{$\rsrch(s,e)$ resume Cyclic Scanning for $k$-intersection}
	\label{alg:resume_reach}
	\KwGlobalVar{Graph $G=(V,E)$, a realization $T=(T_1,\cdots,T_k)$;}
	\KwData{An inserted edge $e\coloneqq (u,v)$\;}
	\KwResult{An augmenting path $P$ from $s$ to a $T_k$-joining edge\;}

	Let $\pi$ be $s$ if $v=s$, and otherwise any explored edge in $\rho_G(v)\cap T$, if one exists\;
	\lIf{$\pi=\bot$}{\Return $\bot$}
	$\expl(e,\pi)$ and return the path if it returns a path\;
	Run the Main Loop as in \cref{alg:reach}\;
\end{algorithm}

We begin with the static invariants from Gabow. 

\begin{lemma}[Invariants in static Cyclic Scanning~\cite{GABOW1995259}]
	\label{lem:cyc_scan_invariant}
	At the beginning of each phase (i.e., immediately after executing \cref{line:inc_i}), the following hold:
	\begin{enumerate}
		\item $L_i$ is a connected subtree of $T_i$;
		\item $L_i \subseteq L_{i+1} \subseteq \cdots \subseteq L_{i+k-1}\subseteq L_i \cup V(Q)$;
		\item $Q \subseteq T_{i-1} \cup (E\setminus T)$, and every edge of $Q$ has an endpoint either in $L_i$ or in some earlier edge of $Q$.
	\end{enumerate}
\end{lemma}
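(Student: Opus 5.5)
We argue by induction on the number of phases, viewing the queue as a FIFO sequence of the form
\[
	Q=(\text{remaining edges of } T_{i-1}\cup(E\setminus T)\text{ pushed during phase } i-1),
\]
with phase boundaries defined by the first popped edge lying in $T_i$ (\cref{line:inc_i}). For the base case, after the initialization loop of \cref{alg:reach} we have $L_j=\{s\}$ for all $j$, and $Q$ consists of non-tree edges with head $s$. Invariant 1 holds trivially, invariant 2 is $\{s\}\subseteq\cdots\subseteq\{s\}\subseteq\{s\}\cup V(Q)$, and invariant 3 holds because every queued edge has endpoint $s\in L_1$. To match the convention $i=1$, we treat this initial state as the start of the first phase, with $T_0:=T_k$.

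For the inductive step, we assume the invariants hold at the start of the phase for $T_i$ and track the phase until the next edge of $T_i$ is popped. The central sub-claim is that, during the phase, whenever an edge $e$ is popped and not skipped, exactly one endpoint of $e$ lies in the current $L_i$. Moreover, the unexplored part of $P_{T_i}(e)$ is a contiguous suffix that leaves $L_i$ from that endpoint.

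We prove this from invariant 3 by an inner induction on the pops in this phase. Every popped edge has an endpoint in $L_i$ or in an earlier edge of $Q$. By the time $e$ is popped, every earlier queue edge has already been processed, so either both its endpoints were absorbed into $L_i$ or it was skipped because both were already in $L_i$. Hence some endpoint $x$ of $e$ lies in $L_i$. Since $L_i$ is a connected subtree of $T_i$ (invariant 1, preserved inductively), the path $P_{T_i}(e)$ starting from $x$ runs inside $L_i$ for a prefix and then leaves it. Any edge of $T_i$ with both endpoints in $L_i$ is explored, because $L_i$ is spanned by explored $T_i$-edges and a subtree of a tree is induced. Therefore the explored edges of the path are exactly the prefix inside $L_i$, and scanning the suffix in leaving order keeps $L_i$ connected. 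The vertex sets grow only by attaching paths at a vertex of $L_i$, so invariant 1 is preserved for $L_i$. The edges pushed in this phase are newly explored $T_i$-edges or non-tree edges sharing their heads, so $Q\subseteq T_i\cup(E\setminus T)$ at the next boundary. Each such edge shares an endpoint with the scanned path, which lies in $L_i$ after scanning or in an earlier queued edge, and this gives invariant 3 for index $i+1$.

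For invariant 2 at the new boundary, $L_{i+1}\subseteq\cdots\subseteq L_{i+k}=L_i^{\text{new}}\subseteq L_{i+1}\cup V(Q)$, we chain the old inclusions with the growth of $L_i$. The old chain gave $L_{i+1}\subseteq\cdots\subseteq L_{i+k-1}\subseteq L_i^{\text{old}}$, and $L_i$ only grew, so the chain extends to $L_i^{\text{new}}$. For the last inclusion, every vertex added to $L_i$ this phase is an endpoint of a newly explored $T_i$-edge, which now sits in $Q$ or was already popped. The phase advances exactly when the first such $T_i$-edge is popped, so all of them are still in $Q$, and the old $L_i$ is contained in $L_{i+1}\cup V(\text{old }Q)$. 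The old queue edges were all processed, which puts their endpoints in $L_i$ but not necessarily in $L_{i+1}$. This is the step I expect to be delicate. It needs the cyclic property, namely that the endpoints of old queue edges from $T_{i-1}$ are reconnected into $L_{i+1}$ only later. So I would strengthen invariant 2 to record that $L_i\setminus L_{i+1}$ consists of endpoints of queued $T_i$-edges or of non-tree edges queued behind them. This is the heart of Gabow's argument, and I would follow it explicitly.
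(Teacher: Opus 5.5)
Your treatment of invariant (1) matches the paper's argument and in places sharpens it, notably the inner induction showing that a popped, unskipped edge has exactly one endpoint in the current $L_i$ and that the unexplored part of $P_{T_i}(e)$ is a suffix. The gap is in the step you flag yourself, the last inclusion of invariant (2) at the new boundary, $L_i^{\mathrm{new}}=L_{i+k}\subseteq L_{i+1}\cup V(Q^{\mathrm{new}})$. You leave it unproven and propose to strengthen invariant (2) so that $L_i\setminus L_{i+1}$ consists of endpoints of queued edges. Since $L_{i+k}=L_i$, that statement is not a strengthening. It is exactly the inclusion you need, restated, and you give no argument for it.

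The missing observation is short, and it is the paper's one-line argument $L_i'\subseteq L_{i+1}\cup V(Q')$. Write $L_i^{\mathrm{new}}=L_i^{\mathrm{old}}\cup N$, where $N$ is the set of vertices added to $L_i$ during the phase.
\begin{enumerate}
\item The first link of the old chain gives $L_i^{\mathrm{old}}\subseteq L_{i+1}$ directly, not merely $\subseteq L_{i+1}\cup V(Q^{\mathrm{old}})$ as you wrote. Moreover, $L_{i+1}$ does not change during phase $i$.
\item Every vertex of $N$ is an endpoint of a $T_i$-edge explored in this phase, and all such edges are still queued at the boundary.
\end{enumerate}
Endpoints of old queue edges are not a third category requiring the ``cyclic property''. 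After being processed they lie in $L_i^{\mathrm{new}}$, hence in one of these two parts.

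The same split repairs two smaller slips.
\begin{enumerate}
\item The old chain gives $L_{i+k-1}\subseteq L_i^{\mathrm{old}}\cup V(Q^{\mathrm{old}})$, not $L_{i+k-1}\subseteq L_i^{\mathrm{old}}$. So ``$L_i$ only grew'' is not enough to extend the chain to $L_i^{\mathrm{new}}$; you also need $V(Q^{\mathrm{old}})\subseteq L_i^{\mathrm{new}}$. This holds because, for $k>1$, the old queue contains no $T_i$-edge. Hence every old queue edge is popped and either skipped or scanned before the phase ends, which places both its endpoints in $L_i$.
\item For invariant (3) at index $i+1$, the shared endpoint must lie in $L_{i+1}$ or in an earlier edge of $Q^{\mathrm{new}}$, not merely ``in $L_i$ after scanning''. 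A vertex of the current $L_i$ either lies in $L_i^{\mathrm{old}}\subseteq L_{i+1}$ or was added this phase as an endpoint of an earlier queued $T_i$-edge.
\end{enumerate}
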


\begin{proof}
	We prove the invariants by induction on the number of phases. 

	\textbf{Base Case:} $i=1$, and $Q$ is populated with edges in $\rho_G(s) \setminus T$. For all $j \in [k]$, $L_j = \{s\}$.
	(1) Each $L_j$ is a single vertex, which is trivially a connected subtree.
	(2) Since all $L_j$ are identical, the inclusions $L_1 \subseteq \dots \subseteq L_k$ hold, and $L_k \subseteq L_1 \cup V(Q)$ is immediate.
	(3) Every edge $e \in Q$ enters $s$ (so it has an endpoint in $L_1$), and $Q$ only contains edges in $E \setminus T$.

	\textbf{Inductive Step:} Assume the invariants hold at the start of phase $i$. We show they remain valid at the start of phase $i+1$ (after $i$ is incremented at \cref{line:inc_i}).

	\begin{itemize}
		\item \textbf{Invariant (1):} We show this holds throughout phase $i$. When the algorithm pops an edge $e$, it explores two cases:
		      \begin{enumerate}
			      \item All edges $f$ on the path of unexplored edges in $P_{T_i}(e)$: By Invariant (3), $e$ has at least one endpoint in the current $L_i$. The algorithm grows $L_i$ by adding a path with one end already attached to $L_i$; thus, $L_i$ remains a connected subtree. The \texttt{if} condition at \cref{line:scan_C} ensures we only process $e$ if it actually expands $L_i$.
			      \item Edges $h \in E \setminus T_i$: These edges do not belong to $T_i$ and thus do not modify $L_i$.
		      \end{enumerate}
		      Applying this argument to each popped edge $e$ proves the statement.

		\item \textbf{Invariant (3):} Let $Q'$ denote the set of elements enqueued during phase $i$. $Q \subseteq T_{i-1} \cup (E\setminus T)$ holds trivially when $k=1$, so we assume $k>1$. By the inductive hypothesis, no edges in $Q$ belong to $T_i$, and the first edge enqueued in $Q'$ is from $T_i$. Thus, the algorithm processes exactly all edges in $Q$ before the phase ends. Consequently, $Q$ becomes $Q'$ at the start of phase $i+1$.

		      Moreover, $Q'$ consists only of previously unexplored edges from $P_{T_i}(e)$ (which belong to $T_i$) or edges from $\rho_G(w) \setminus T$ (which belong to $E \setminus T$). Thus, $Q' \subseteq T_i \cup (E \setminus T)$ holds for the next phase.

		      Finally, any edge $f \in Q'$ satisfies the endpoint property:
		      \begin{enumerate}
			      \item If $f \in P_{T_i}(e)$, it is enqueued because $e$ has an endpoint in $L_i$. Since $L_i \subseteq L_{i+1}$ (by Invariant (2) for phase $i$), this endpoint is also in $L_{i+1}$. Edges are enqueued in their order along the path, ensuring each overlaps with the previous one or the endpoint of $e$.
			      \item If $h \in E \setminus T_i$ is enqueued, it shares an endpoint $w$ with some $f \in P_{T_i}(e)$ that was just explored.
		      \end{enumerate}

		\item \textbf{Invariant (2):} Let $L'_i$ and $Q'$ be the states of $L_i$ and $Q$ at the start of phase $i+1$. Note that $L_j$ for $j \neq i$ remain unchanged during this phase, so their relative inclusions persist.

		      For each $e \in Q$, we explore and enqueue all previously unexplored edges on $P_{T_i}(e)$. This implies $L_i \cup V(Q) \subseteq L'_i$. Furthermore, we have $L'_i \subseteq L_{i+1} \cup V(Q')$ because $L_i \subseteq L_{i+1}$ at the start of phase $i$, and every new vertex added to $L_i$ to form $L'_i$ corresponds to an edge enqueued in $Q'$.

		      Combining the inclusion chain from phase $i$ with these new relations:
		      \[
			      L_{i+1} \subseteq L_{i+2} \subseteq \cdots \subseteq L_{i+k-1} \subseteq L_i \cup V(Q) \subseteq L'_i \subseteq L_{i+1} \cup V(Q')
		      \]
		      Removing the intermediate terms confirms the invariant for phase $i+1$.
	\end{itemize}
	This completes the induction.
\end{proof}

The next lemma is the new incremental step. It shows that these same invariants still hold after a paused search is resumed following an insertion.

\begin{lemma}
	\label{lem:resume_invariant}
	The invariants in \cref{lem:cyc_scan_invariant} are maintained after \cref{alg:resume_reach} enqueues $e$ and invokes the Main Loop.
\end{lemma}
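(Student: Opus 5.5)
The plan is to reduce to the static argument. First we pin down the state in which a search is paused. When $\srch$ or a previous $\rsrch$ returns $\bot$, the Main Loop has terminated with $Q=\emptyset$ at some phase index $i$. We will argue that \cref{lem:cyc_scan_invariant} still holds at that moment, for the current $i$ and with $Q=\emptyset$. The reason is that the invariants hold at the start of every phase, and within a phase the reasoning for Invariant (1) already applies throughout. Hence a paused state consists of subtrees $L_1,\dots,L_k$ with $L_i\subseteq L_{i+1}\subseteq\cdots\subseteq L_{i+k-1}\subseteq L_i$. All of these sets are therefore equal to a common set $S$, and each $L_j$ is a connected subtree of $T_j$ spanning $S$. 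This matches the picture in the proof of \cref{lem:reachability_cut_certificate}.

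Next we consider the insertion of $e=(u,v)$, which changes neither $T$ nor any $L_j$. Invariants (1) and (2) therefore persist trivially: (1) because $L_i$ is unchanged, and (2) because equality of all $L_j$ gives $L_{i+k-1}\subseteq L_i\subseteq L_i\cup V(Q)$. The only work is Invariant (3) for the new queue $Q=\{e\}$. We have $e\in E\setminus T$, since new edges are not in $T$, so $Q\subseteq T_{i-1}\cup(E\setminus T)$. For the endpoint condition, \cref{alg:resume_reach} enqueues $e$ only when $\pi\ne\bot$. Either $v=s\in L_i$, or $\pi$ is an explored edge in $\rho_G(v)\cap T$, say $\pi\in T_j$; then $v\in L_j=S=L_i$. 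In both cases the head $v$ lies in $L_i$, which is exactly what (3) requires for the first queue element.

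Then we invoke the Main Loop. We need that the state entering the loop is a legitimate ``start of phase'' state, even though \cref{line:inc_i} has not just been executed. Since $e\notin T_i$, the first pop does not increment $i$, and the loop behaves as if we were at the start of phase $i$. The inductive step of \cref{lem:cyc_scan_invariant} uses only the three invariants at the phase start, not how that state arose. So we can rerun that inductive argument verbatim, and it carries the invariants to every subsequent phase. The base of this induction is the state established in the previous paragraph rather than the initial state of $\srch$.

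The main obstacle is the first step: the invariants in \cref{lem:cyc_scan_invariant} are only stated immediately after \cref{line:inc_i}, whereas the pause happens when $Q$ empties mid-phase. We must check that emptying the queue leaves the inclusion chain intact, and in particular that $L_i\cup V(Q_{\mathrm{old}})\subseteq L'_i$ with no new enqueues forces $L'_i\subseteq L_{i+1}$, so that all $L_j$ coincide. We plan to extract this from the inclusions derived in the Invariant (2) part of that proof, using $Q'=\emptyset$. A secondary subtlety arises if $e$ is $T_k$-joining: then $\expl$ returns immediately and the invariants are not needed.
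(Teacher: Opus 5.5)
Your proposal is correct and follows essentially the same route as the paper: use invariant (2) with $Q=\emptyset$ to get $L_i=L_{i+1}=\cdots=L_{i+k-1}$, note that $e\notin T$ leaves (1) and (2) intact, verify (3) through $v=s$ or an explored tree edge entering $v$, and then rerun the inductive step of \cref{lem:cyc_scan_invariant} starting from this state. You also go one step beyond the paper, which simply assumes that the paused (mid-phase, $Q$ empty) state satisfies the invariants, by sketching why it does: $L'_i=L_i\subseteq L_{i+1}$ because nothing new is enqueued, and $L_{i+k-1}\subseteq L_i\cup V(Q_{\mathrm{old}})\subseteq L'_i$; this is the right way to close that point.
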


\begin{proof}
	We first verify that the invariants hold immediately after exploring and enqueueing the newly inserted edge $ e = (u,v)$, and before resuming the Main Loop. Assume that before inserting $e=(u,v)$, the paused state satisfies all the invariants of \cref{lem:cyc_scan_invariant}. Then, by invariant~(2) and the fact that $Q$ is currently empty, we have
	\begin{equation}
		\label{eq:invariant_before_insert}
		L_i = L_{i+1} = \cdots = L_{i+k-1}.
	\end{equation}

	\begin{itemize}
		\item \textbf{Invariant~(3).} If $v=s$, then $v\in L_i$ by definition. Otherwise, the algorithm resumes the Main Loop only when there exists an explored tree edge $f\in \rho_G(v)\cap T$, and hence $v\in L_i$ by \eqref{eq:invariant_before_insert}. Since $Q$ now contains only the edge $e=(u,v)$, invariant~(3) holds.
		\item \textbf{Invariant~(1).} Since $e \notin T$, exploring and enqueueing $e$ does not alter any $L_i$. Thus, all $L_i$ remain subtrees of their respective $T_i$.
		\item \textbf{Invariant~(2).} Again, since $e \notin T$, the chain of inclusions
		      \[
			      L_i \subseteq L_{i+1} \subseteq \cdots \subseteq L_{i+k-1}
		      \]
		      remains unchanged. The final inclusion $L_{i+k-1} \subseteq L_i \cup V(Q)$ also holds, since $L_{i+k-1}=L_i$ by \eqref{eq:invariant_before_insert} and $Q$ contains only $e$.
	\end{itemize}

	After these updates, the resumed execution processes $e$ in the same way as an edge that is newly explored and enqueued during a normal execution of the search. Since the state at the moment the Main Loop resumes satisfies all three invariants, the inductive proof of \cref{lem:cyc_scan_invariant} applies directly to the remainder of the execution.

	Thus, the invariants hold throughout the resumed search.
\end{proof}

With these invariants in place, the correctness and running time of incremental cyclic scanning now follow.

\begin{theorem}[Cyclic-scanning implementation of $\srch/\rsrch$]
	\label{thm:cyclic_search_resume}
	The procedures $\srch$ and $\rsrch$ can be implemented via cyclic scanning so that they satisfy \cref{lem:search_resume_correctness}.

	Moreover, for a fixed value of $k$, the total time spent on all calls to $\srch$ and $\rsrch$ during one round, i.e., one iteration of the \texttt{while}-loop in \cref{line:while_non_full} of \cref{alg:round_robin}, is $O(m)$.
\end{theorem}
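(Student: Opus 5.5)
The proof has three parts: reachability for a $\bot$ answer, component-locality of the path returned by $\srch$, and the $O(m)$ bound per round. All three rest on the invariants of \cref{lem:cyc_scan_invariant}, which \cref{lem:resume_invariant} extends across pauses and resumptions.

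\textbf{Reachability when $\bot$ is returned.} Assume $Q$ empties without a $T_k$-joining edge being explored. By invariant (2) with $Q=\emptyset$ we get $L_1=\dots=L_k=:S$, and each $L_i$ is a subtree of $T_i$ by invariant (1). I will show the explored set is closed under every arc type of $D(T)$.
\begin{itemize}
\item Deficit-source arcs out of $s$ are handled at initialization. For a resumed search, $\rsrch$ explores a new edge with head $s$ directly.
\item Same-head arcs from an explored tree edge $f$ are followed right after $f$ is explored. For a resumed search, $\rsrch$ explores a new edge exactly when some explored tree edge has the same head.
\item For tree-exchange arcs, take an explored $e$ and any $j$. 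Both endpoints of $e$ lie in $S=L_j$, because $e$ was popped in some phase and either skipped (endpoints already in $L_i\subseteq L_j$) or caused its path to be scanned. Hence $P_{T_j}(e)$ lies inside the subtree $L_j$, and all its edges are explored. Here I use that an explored tree edge of $T_j$ contributes its endpoints to $L_j$, and that every tree edge with both endpoints in the subtree $L_j$ is explored, since $L_j$ is grown only by exploring paths.
\end{itemize}
Thus the explored set equals the set of edge-vertices reachable from $s$, and none of them is joining. This is exactly the first bullet of \cref{lem:search_resume_correctness}. For $\rsrch$, note that insertions only add vertices and arcs to $D(T)$, and the only new arcs are incident to the new edge-vertex $e$. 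So closure of the old explored set together with the predecessor check in \cref{alg:resume_reach} gives the same conclusion.

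\textbf{Component-local output of $\srch$.} When $\expl$ hits a joining edge, tracing labels yields an augmenting path, because every label corresponds to an arc of $D(T)$. The label of an edge found on $P_{T_i}(e)$ is $e$ even though the edge may lie on the fundamental path of a predecessor. This is legitimate, since we only need that edge $\in P_{T_i}(\text{label})$.
\begin{itemize}
\item \emph{No Premature Joining Edge} holds because the search stops at the first joining edge explored, and edges are explored along $P_{T_i}(e)$ in order of leaving $L_i$. Hence no joining edge sits on an earlier scanned fundamental path.
\item \emph{Treewise Shortcut-Free} follows Gabow's argument. The edges of $P_{T_i}(e_j)$ were all explored no later than the phase in which $e_{j+1}$ was labeled. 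Later path edges $e_{j+2},\dots$ are newly explored, hence unexplored at that time, so they cannot lie on that path.
\item \emph{Deepest Final Joining Edge} is enforced by the final modification in $\expl$, exactly as in the brute-force case. Replacing the last edge by the deepest joining edge of the same fundamental path preserves the other two properties.
\end{itemize}
For $\rsrch$, only an augmenting path is required, which the same trace-back provides.

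\textbf{Running time.} This is the main obstacle. In one round, each component's search, including all its resumptions, explores each edge-vertex at most once and pops it from $Q$ at most once. Scanning $\rho_G(w)\setminus T$ happens once per head $w$, and $\rsrch$ does $O(1)$ work per inserted edge apart from the main loop. The nontrivial cost is the step at \cref{line:scan_C} that finds the unexplored suffix of $P_{T_i}(e)$. For this I plan to invoke the data structure of \cref{thm:scan_C}: since $L_i$ is a subtree and exactly one endpoint of $e$ lies outside it, we walk from the outside endpoint toward $L_i$ via parent pointers of $T_i$ rooted at $a$ (handling the LCA). The cost is proportional to the number of newly explored edges plus $O(1)$, so it charges to explorations.

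Searches of different components touch disjoint edge sets, since by \cref{lem:augment_path_structure} and \cref{lem:reachability_cut_certificate} explored edges stay inside the component. The re-run at \cref{line:search_again} happens at most once per component and again costs time linear in the component's edges. Summing over components therefore gives $O(m)$ per round. The delicate point to verify is that the reset-and-rerun, and the resumed phases after insertions (which by \cref{lem:resume_invariant} restart with $L_i=\dots=L_{i+k-1}$), do not cause any edge to be scanned more than a constant number of times per round.
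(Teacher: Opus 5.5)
\textbf{Overall.} Your reachability and component-locality arguments are essentially the paper's. You use invariant~(2) with $Q=\emptyset$ to get $L_1=\dots=L_k=S$, note that every popped edge ends with both endpoints in $S$, and conclude closure under all arc types; the paper phrases the same argument as a contradiction at the first unexplored edge on a shortest reachable path. Your exploration-time argument for Treewise Shortcut-Free is equivalent to the paper's observation that every previously unexplored edge of $P_{T_i}(e_j)$ gets $e_j$ as parent. The point you flag as delicate at the end is already settled by what you wrote. Labels persist across pauses, so a resumed search never re-explores or re-pops an edge, and the reset-and-rerun happens once per component. That is exactly the paper's ``each edge is explored at most twice per round''.

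\textbf{The gap: data-structure overhead.} The genuine gap is in the running-time accounting for the data structure of \cref{thm:scan_C}. You charge only \textsc{Query}, at cost proportional to newly explored edges plus $O(1)$. But the structure must also be built by \textsc{Init}, at cost $\Theta(|V(T)|)$ per tree. Since augmentation changes the $T_i$ between rounds, every round pays $\Theta((k-1)n)$ for $T_1,\dots,T_{k-1}$, plus $O(n)$ for the $T_k$-components. Two observations are missing.
\begin{enumerate}
\item These structures must be built once per round and shared by all component searches. When switching components, and at \cref{line:search_again}, you use \textsc{Reset}$(s)$, whose cost $O(|L_{\mathrm{old}}|)$ is charged to the previous search's explored set. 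If you initialize per search instead, you pay $\Theta(kn)$ per component and the per-round bound fails.
\item The $\Theta(kn)$ term must be absorbed into $O(m)$. This holds only because the graph already contains the complete $(k-1)$-intersection $T_1\cup\dots\cup T_{k-1}$, so $m\ge (k-1)(n-1)$, and together with $m\ge n$ this gives $kn=O(m)$.
\end{enumerate}
Without these, your argument yields only $O(m+kn)$ per round, or worse.

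\textbf{A smaller point.} You cite \cref{lem:augment_path_structure} and \cref{lem:reachability_cut_certificate} for the fact that a search's explored edges, apart from its final joining edge, stay inside its component. Neither applies: the first concerns component-local paths, and the second assumes no joining edge is reachable. The fact follows directly by induction on exploration order. Every edge explored before the halt is non-joining. Each is reached from $s$ or from an explored edge inside $K$, either by a same-head arc or along a fundamental path whose scanned part starts in $L_i\subseteq V(K)$ and stops at the first joining edge.
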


\begin{proof}[Proof of \cref{thm:cyclic_search_resume}]
	\textbf{Reachability.}
	We show that cyclic scanning explores all edge-vertices reachable from $s$ in the auxiliary graph $D(T)$.

	Every call to $\expl$ is made through a valid predecessor in $D(T)$: deficit-source arcs are used at initialization, same-head exchange arcs are used when scanning $\rho_G(w)\setminus T$ after an explored tree edge with head $w$, and tree exchange arcs are used when scanning $P_{T_i}(e)$. Thus all explored edge-vertices are reachable.

	Conversely, suppose $\srch(s)$ stops with $Q=\emptyset$. By \cref{lem:cyc_scan_invariant}, at termination we have $L_1=L_2=\cdots=L_k\eqqcolon S.$ Also, every explored edge has both endpoints in $S$: once an explored edge is popped, the scan either skips it because both endpoints are already in the current $L_i$, or scans the corresponding path in $T_i$ and adds the other endpoint.

	Assume for contradiction that some edge reachable from $s$ is unexplored, and let $g$ be the first unexplored edge on a shortest reachable path. Let $\pi$ be its predecessor.
    
	If $\pi=s$, then $g\in \rho_G(s)\setminus T$, so $g$ is explored during initialization. If $\pi$ is a tree edge and $(\pi,g)$ is a same-head exchange arc, then $\pi$ and $g$ have the same head $w$; when $\pi$ was explored, the algorithm scanned $\rho_G(w)\setminus T$ and therefore explored $g$.

	If $(\pi,g)$ is a tree exchange arc, then for some $j\in[k]$ we have $\pi\notin T_j$, $g\in T_j$, and $g\in P_{T_j}(\pi)$. Since both endpoints of $\pi$ lie in $S=L_j$ and $L_j$ is a connected subtree of $T_j$, the entire tree path in $T_j$ between the endpoints of $\pi$ is contained in $L_j$. Hence every edge of $P_{T_j}(\pi)$, including $g$, is explored, a contradiction.

	Therefore, when $\srch(s)$ returns $\bot$, the set of explored edge-vertices is exactly the set of edge-vertices reachable from $s$ in the current auxiliary graph, and no $T_k$-joining edge is reachable.

	Now consider $\rsrch(s,e)$ after inserting a new edge $e=(u,v)$. Assume inductively that before the insertion, the set of explored edge-vertices is exactly the set of edge-vertices reachable from $s$. Since the realization $T$ has not changed while the search is paused, inserting $e$ only adds the new edge-vertex $e$ and arcs incident to $e$ in the auxiliary graph; hence, the reachable region can grow only if $e$ itself becomes reachable.

	Since $e\notin T$, the possible already-explored predecessors of $e$ are $s$ (when $v=s$), and explored tree edges in $\rho_G(v)\cap T$, via same-head exchange arcs. These cases are checked in \cref{alg:resume_reach}. If neither case holds, then $e$ is not reachable, and the old explored set remains the full reachable set. If one case holds, $\expl$ explores $e$ with a valid predecessor. By \cref{lem:resume_invariant}, after enqueueing $e$ the same cyclic-scanning invariants hold before the Main Loop is invoked. The argument above then applies directly to the resumed Main Loop: when it stops with $Q=\emptyset$, all edge-vertices reachable through $e$ have been explored.

	Thus, whenever $\rsrch(s,e)$ returns $\bot$, the set of explored edge-vertices is exactly the set of edge-vertices reachable from $s$ in the current auxiliary graph, and no $T_k$-joining edge is reachable.

	\textbf{Component-local augmenting path.} We now show that whenever $\srch$ returns an augmenting path, it is component-local. When the algorithm processes an edge $e$, every unexplored edge of $P_{T_i}(e)$ is assigned $e$ as its parent. Hence at most one of these edges can occur after $e$ on the returned search-tree path, which gives the Treewise Shortcut-Free property.
	Since the algorithm halts as soon as it encounters a $T_k$-joining edge, the returned path satisfies the No Premature Joining Edge property. The final modification step guarantees the Deepest Final Joining Edge property.

	\textbf{Runtime.}
	As searches start from disjoint $T_k$-components and stop as soon as they reach a $T_k$-joining edge, each edge that is not $T_k$-joining is reached only by the search from its own component, while each $T_k$-joining edge is reached only once before its endpoint components are deactivated.
	Since each edge is explored and scanned by the algorithm at most twice (once during the paused/resumed search, and once after the exploration states are reset in \cref{line:search_again}), and the data structure for scanning $P_{T_i}(e)$ runs in time linear in the number of edges scanned, the total runtime for reachability search per round is $O(m)$.

    Finally, we check the initialization of the data structures of \cref{thm:scan_C}. At the beginning of each round, we initialize the data structures once for each tree $T_1,\ldots,T_{k-1}$ (rooted at $a$), in $O((k-1)n)$ time, and once across all $T_k$-components (rooted at their deficient vertices), in $O(n)$ time because the components are vertex-disjoint. These structures are reused throughout the round. We reset a data structure at \cref{line:search_again} and when switching components; each $T_k$-component is reset at most twice, at cost linear in its explored set. Thus initialization and resetting cost $O(kn)=O(m)$ per round, since $m\ge n$ and the existing $(k-1)$-intersection gives $m\ge (k-1)(n-1)$.
\end{proof}

\subsection{Proof of the Main Theorem}
\label{sec:proof_main_incremental}

We can now return to the round-robin framework and finish the proof of \Cref{thm:main}.

\begin{proof}[Proof of \cref{thm:main}]
	\textbf{Runtime.} We analyze the number of rounds, namely the iterations of the \texttt{while} loop on \cref{line:while_non_full}. Assume that the previous stages have already produced a valid $(k-1)$-intersection.

	In each round, every active $T_k$-component initiates a search. Since each successful search marks at most two $T_k$-components as inactive, the \texttt{if} condition on \cref{line:check_active} is satisfied for at least $\lceil (q-1)/2 \rceil$ components, producing at least $\lceil (q-1)/2 \rceil$ component-local augmenting paths. By \cref{lem:no_conflict_augment_path}, augmenting along all these paths simultaneously maintains a valid realization and successfully merges pairs of $T_k$-components.

	Thus, the number of $T_k$-components drops by at least a factor of two per round, which means the \texttt{while} loop on \cref{line:while_non_full} is executed at most $O(\log n)$ times for a fixed $k$. The total runtime is bounded by:

	$$
		T \le
		\underbrace{k}_{\text{the } k \text{-loop}}
		\times
		\underbrace{O(\log n)}_{\text{the \texttt{while} loop}}
		\times
		\underbrace{O(m)}_{\text{all $\srch,\rsrch$ calls for a fixed $k$ (\cref{thm:cyclic_search_resume})}}
		= O(km \log n).
	$$

	\textbf{Correctness.}
	Whenever the algorithm reports the answer at \cref{line:report_k_minus_1}, it is because either $\srch(s)=\bot$ or, after some edge insertions, $\rsrch(s,e)=\bot$. Let $L$ be the set of explored edge-vertices maintained by this paused search. By \cref{lem:search_resume_correctness}, $L$ is exactly the set of edge-vertices reachable from $s$ in the current auxiliary graph $D(T)$, and no $T_k$-joining edge is reachable from $s$. Applying \cref{lem:reachability_cut_certificate} gives a rooted cut of size at most $k-1$. By \cref{lem:edmonds}, the current graph does not contain any complete $k$-intersection.

	On the other hand, since the algorithm has already constructed a complete $(k-1)$-intersection, the minimum rooted connectivity is exactly $k-1$. Running the algorithm symmetrically on both $G$ and $G^\mathrm{rev}$ yields the correct directed global minimum cut.

	Regarding explicit maintenance of the minimum cut, the search procedure maintains the graph-vertex set $\{s\}\cup V(L)$. This set is updated only by inserting a vertex or by resetting it to $\emptyset$ when a search finds an augmenting path and restarts from another active $T_k$-component. Between resets, each vertex is added at most once. Whenever a vertex $v$ is added, we update the maintained edges of $\rho_G(\{s\}\cup V(L))$ by adding $\rho_G(v)$ and removing internal edges incident to $v$. This takes $O(m)$ total time per round and preserves the overall asymptotic runtime.
\end{proof}

\bibliography{refs}

\appendix
\section{Omitted Proofs}
\label{sec:appendixA}

\AugmentingPathStructure*
\begin{proof}
	Let $P=(s,e_1,\ldots,e_r,t)$ be a component-local augmenting path starting in the $T_k$-component $K_i$.

	\paragraph{Locality before the final edge.}
	We prove inductively that every $e_j$ with $j<r$ has both endpoints in $V(K_i)$. If $r>1$, the deficit-source arc $(s,e_1)$ gives $\mathrm{head}(e_1)=s\in V(K_i)$, and $e_1$ is not $T_k$-joining by the No Premature Joining Edge property; hence its tail also lies in $V(K_i)$. Now suppose that $e_j$ lies inside $K_i$ and $j+1<r$. If $(e_j,e_{j+1})$ is a same-head exchange arc, then $e_{j+1}$ has its head in $V(K_i)$ and is not $T_k$-joining, so its tail also lies in $V(K_i)$. If it is a tree exchange arc, then $e_{j+1}\in P_{T_h}(e_j)$ for some $h\in[k]$. The No Premature Joining Edge property says that this fundamental path contains no $T_k$-joining edge. Since the endpoints of $e_j$ lie in $V(K_i)$, the entire path, and in particular $e_{j+1}$, lies inside $K_i$.

	\paragraph{The final joining edge.}
	The joining-sink arc $(e_r,t)$ shows that $e_r$ is $T_k$-joining. It remains to show that $e_r$ is incident to $V(K_i)$. If $r=1$, this follows from $\mathrm{head}(e_r)=s$. If the arc entering $e_r$ is a same-head exchange arc, then $\mathrm{head}(e_r)=\mathrm{head}(e_{r-1})\in V(K_i)$. Otherwise it is a tree exchange arc, so $e_r\in P_{T_h}(e_{r-1})$ for some $h<k$. Along each of the two paths from an endpoint of $e_{r-1}$ to their LCA in $T_h$, edge depths strictly decrease. Consequently, a maximum-depth $T_k$-joining edge on $P_{T_h}(e_{r-1})$ is one of the first joining edges encountered from an endpoint of $e_{r-1}$ and is therefore incident to $V(K_i)$. The Deepest Final Joining Edge condition chooses such an edge. Thus, the other endpoint of $e_r$ lies in a different $T_k$-component $K_j$.

	\paragraph{Validity and component structure.}
	For each $h\in[k]$, list the exchanges in $T_h$ in their order along $P$. The Treewise Shortcut-Free property says that no edge deleted by a later exchange lies on the fundamental path used by an earlier exchange. Hence \cref{lem:matroid_exchange} shows that every $T_h$ remains a forest. Each $T_h$ with $h<k$ also retains $n-1$ edges and therefore remains a spanning tree. Every exchange that modifies $T_k$ uses only edges inside $K_i$ and preserves their number, so the $T_k$-component partition is unchanged until $e_r$ is inserted. Since $e_r$ is $T_k$-joining, inserting it into $T_k$ merges exactly $K_i$ and $K_j$ and preserves acyclicity.

	\paragraph{Deficient vertices.}
	Under the index shift defining augmentation, $e_1$ is the sole unpaired addition. Every edge that leaves the union is paired, across a same-head exchange arc, with an edge of the same head that enters it. Hence $\deg_T^-(s)$ increases by one and every other indegree is unchanged. Thus $s$ is no longer deficient, and the merged component has the deficient vertex of $K_j$ as its unique deficient vertex. If $a\in K_j$, the merged component contains $a$ and has no deficient vertex.
\end{proof}

\AugmentingNoConflict*

\begin{proof}
	Fix the realization at the beginning of the round, and index the selected paths by $\ell\in[t]$. Write $P^\ell=(s^\ell,e_1^\ell,e_2^\ell,\ldots,e_{r^\ell}^\ell,t)$, and let $K^\ell$ be its starting $T_k$-component.

	By \cref{lem:augment_path_structure}, each path stays inside its starting component until its final joining edge. Once a path is selected, both components joined by that edge become inactive, so no later path can start in either component or use the same edge. Hence the selected paths are edge-disjoint.

	By the exchange view of augmentation in \cref{fig:augmentation-exchanges}, we can augment along these paths by first applying the exchange $T_i\gets T_i-e_{j+1}^\ell+e_j^\ell$ to the appropriate tree $T_i\ni e_{j+1}^\ell$ for every tree-exchange arc $(e_j^\ell,e_{j+1}^\ell)$, and then inserting each final $T_k$-joining edge $e_{r^\ell}^\ell$ into $T_k$.

	\paragraph{The tree exchanges.}
	We first show that all tree exchanges induced by these paths can be combined without conflict. Fix $T_i$. For each $\ell\in[t]$ and each tree-exchange arc $(e_j^\ell,e_{j+1}^\ell)$ with $e_{j+1}^\ell\in T_i$, call the corresponding exchange \emph{final} if $j+1=r^\ell$. Order these exchanges as follows: first list all non-final exchanges, preserving their order within each path and ordering different paths arbitrarily; then list the final exchanges by increasing $d(e_{r^\ell}^\ell)$, breaking ties by decreasing edge identifier (as in \cref{def:good_aug_path}).

	We verify that this order satisfies the shortcut-free condition in \cref{lem:matroid_exchange}. First consider a non-final exchange $(e_j^\ell,e_{j+1}^\ell)$. By the \emph{No Premature Joining Edge property}, $P_{T_i}(e_j^\ell)$ contains no $T_k$-joining edge and lies entirely inside $K^\ell$. By the Treewise Shortcut-Free property, no edge deleted later by the same path lies on $P_{T_i}(e_j^\ell)$. Any edge deleted by a later non-final exchange from another path lies inside a different starting component, while any edge deleted by a later final exchange is $T_k$-joining. Thus neither type of edge can belong to $P_{T_i}(e_j^\ell)$.

	Next consider two final exchanges in $T_i$: $(e_{r^\ell-1}^\ell,e_{r^\ell}^\ell)$, followed by $(e_{r^h-1}^h,e_{r^h}^h)$. If $e_{r^h}^h\in P_{T_i}(e_{r^\ell-1}^\ell)$, then, by the chosen order, $e_{r^h}^h$ is either deeper than $e_{r^\ell}^\ell$ or equally deep with a smaller identifier. Either case contradicts the \emph{Deepest Final Joining Edge property} in \cref{def:good_aug_path}. Hence $e_{r^h}^h\notin P_{T_i}(e_{r^\ell-1}^\ell)$.

	By \cref{lem:matroid_exchange}, every $T_i$ remains a forest; for $i<k$, it still has $n-1$ edges and is therefore a spanning tree. All exchanges in $T_k$ are internal to original components and preserve their edge counts, so these components remain trees before the joining edges are inserted.

	\paragraph{Adding the Joining Edges.}
	It remains to show that adding the final joining edges does not create a cycle in $T_k$. Let $H$ be the undirected graph whose vertices are the original $T_k$-components and whose edges are the pairs of components joined by $e_{r^1}^1,\ldots,e_{r^t}^t$. When $P^\ell$ is selected, its starting component $K^\ell$ is active, while both endpoint components of every earlier joining edge are inactive. Thus $K^\ell$ is isolated in the subgraph formed by the earlier edges, so adding $e_{r^\ell}^\ell$ cannot create a cycle. Hence $H$ is a forest.	Since every original $T_k$-component is a tree and contracting these components yields $H$, the new $T_k$ is a forest.

	Finally, each tree of $H$ has exactly one non-starting component, i.e., a component from which no selected augmenting path starts. By \cref{lem:augment_path_structure}, augmentation removes the deficiency of every starting component and changes no other indegree. Hence each merged component has the deficient vertex of its non-starting component, or none if that component is the root component. Thus both the indegree bounds and the component invariant are preserved.
\end{proof}

\section{Auxiliary-Graph Compression for Brute-Force Search}
\label{sec:appendixB}

In this section, we give a simple incremental directed minimum cut algorithm with $O(mk^2\log^2 n)$ total update time. This is slightly worse than the bound in \Cref{thm:main}, but remains near-linear for polylogarithmic $k$.

The algorithm is based on the brute-force algorithm from \Cref{sec:brute_augment}, but compresses the arcs of the auxiliary graph while preserving reachability and distances. The idea is to represent the tree exchange arcs from $e$ to the edges in $P_{T_i}(e)$ using virtual vertices and binary lifting. Recall that $V(D(T))=V\cup E\cup\{t\}$.

\begin{lemma}
	One can build a 0/1-weighted digraph $D'(T)$ such that $V(D(T))\subseteq V(D'(T))$, $|V(D'(T))|=O(nk\log n+m)$, and $|A(D'(T))|=O(nk\log n+mk)$. We call the vertices in $V(D'(T))\setminus V(D(T))$ \emph{virtual vertices}.

	Moreover, for every $v\in V$:

	\begin{itemize}
		\item If there is a path $P$ in $D(T)$ from $v$ to $t$ of length $l$, then there is a path $P'$ in $D'(T)$ from $v$ to $t$ of length $l$, and removing all virtual vertices from $P'$ yields $P$.

		\item If there is a path $P'$ in $D'(T)$ from $v$ to $t$ of length $l$, then removing all virtual vertices from $P'$ yields a valid path $P$ in $D(T)$ from $v$ to $t$ of length $l$.
	\end{itemize}
\end{lemma}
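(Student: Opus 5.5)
We construct $D'(T)$ by keeping the vertex set $V\sqcup E\sqcup\{t\}$ and all arcs of $D(T)$ except the tree exchange arcs, each with weight $1$. The tree exchange arcs are replaced by a binary-lifting gadget. This gives length-preserving paths in both directions: every original arc of $D(T)$ becomes either a single weight-$1$ arc, or a weight-$1$ arc into a virtual vertex followed by weight-$0$ arcs down to the target. Removing virtual vertices then recovers exactly one original arc per weight-$1$ arc.

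\textbf{Same-head arcs.} These need a small gadget of their own to meet the arc bound. There can be $\Theta(m\cdot k)$ pairs $(e,f)$ with $e\in T$, $f\notin T$ sharing a head, which already fits the $O(mk)$ term, since each vertex has tree indegree at most $k$. So I keep them explicitly. Deficit-source and joining-sink arcs number $O(m)$.

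\textbf{Gadget for tree exchange arcs.} Fix $i\in[k]$. For $i<k$, root $T_i$ at $a$; for $i=k$, root each $T_k$-component arbitrarily. For each vertex $x$ and $j\le\lceil\log n\rceil$, create a virtual vertex $w_{i,x,j}$ representing the $2^j$ tree edges on the path from $x$ upward toward its ancestor $\mathrm{anc}^{2^j}(x)$ (truncated at the root).

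The gadget has the following arcs:
\begin{itemize}
\item $w_{i,x,0}$ gets a weight-$0$ arc to the edge-vertex of the parent edge of $x$.
\item $w_{i,x,j}$ gets weight-$0$ arcs to $w_{i,x,j-1}$ and to $w_{i,\mathrm{anc}^{2^{j-1}}(x),j-1}$.
\end{itemize}
This uses $O(nk\log n)$ virtual vertices and arcs. A virtual vertex $w_{i,x,j}$ reaches, via weight-$0$ paths, exactly the edge-vertices on its block.

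For each non-$T_i$ edge $f=(u,v)$, in the case where $u,v$ are in the same $T_i$ component, compute $\ell=\mathrm{LCA}_{T_i}(u,v)$. The paths $u\to\ell$ and $v\to\ell$ are each covered by two (possibly overlapping) power-of-two blocks, sparse-table style. Add weight-$1$ arcs from $f$ to these at most four virtual vertices. This gives $O(k)$ arcs per edge, so $O(mk)$ in total, within the stated bounds. When $u,v$ lie in different components, i.e.\ $f$ is $T_k$-joining for $i=k$, no arcs are added, matching $D(T)$.

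\textbf{Correspondence.} A path in $D(T)$ maps to $D'(T)$ by replacing each tree exchange arc $(f,g)$ with $f\to w\to\cdots\to g$, where $w$ is a block containing $g$. This adds weight $1$ total and leaves the original vertices in order. Conversely, in $D'(T)$ every virtual vertex has only weight-$0$ out-arcs leading down to edge-vertices, and the only in-arcs from original vertices have weight $1$. So any maximal virtual segment of a path starts with a weight-$1$ arc from some $f$ and ends at some $g\in P_{T_i}(f)$, and $(f,g)$ is a genuine tree exchange arc of $D(T)$. Contracting segments therefore yields a valid $D(T)$-path with the same length.

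One subtlety is that $D(T)$ requires $f\notin T_i$. We only add gadget arcs from $f\notin T_i$, and a virtual segment never passes through an original vertex in the middle: it ends at the first original vertex it hits, since weight-$0$ arcs from level-$0$ virtual vertices go directly to edge-vertices.

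\textbf{Expected obstacle.} The main difficulty I expect is making the decomposition exact. Overlapping blocks are harmless here, because reachability is set-based: the union of blocks equals $P_{T_i}(f)$, and overlaps do not create edges outside the path. The blocks must stop at $\ell$ and not include edges above it. The other point to check is the "removing virtual vertices yields $P$" claim in the forward direction, which holds because the inserted segments contain only virtual vertices.
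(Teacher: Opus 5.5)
Your proof is correct and follows essentially the same route as the paper. Both use a sparse-table/binary-lifting gadget for the tree exchange arcs, with at most four connector arcs per pair $(f,i)$, and both prove the path correspondence by contracting maximal virtual segments, each of which carries exactly one weight-$1$ arc. The differences are cosmetic: you put the weight $1$ on the connector arc rather than on the level-$0$ arcs into edge-vertices, and you keep the same-head arcs explicitly instead of routing them through a per-vertex hub $v'$. The explicit same-head arcs number at most $km$ because $\deg_T^-(v)\le k$, so they still fit the $O(nk\log n+mk)$ arc bound.
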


\begin{proof}
	For $D'(T)$, the length of a path means its total arc weight.
	Root each $T_i$; for $i=k$, root each $T_k$-component separately. Write $p_i(x)$ for the parent of $x$, $d_i(x)$ for its depth, and $\mathrm{edge}_i(x)$ for the edge joining $x$ to $p_i(x)$.
	The following construction is similar to a sparse table: virtual vertices represent power-of-two blocks of root paths.

	For every $i\in[k]$, vertex $x$, and integer $j\geq 0$ with $2^j\leq d_i(x)$, create a virtual vertex $x_{i,j}$. Add
	\[
		\begin{aligned}
		x_{i,0} &\longrightarrow \mathrm{edge}_i(x)
		&&\text{with weight $1$},\\
		x_{i,j} &\longrightarrow x_{i,j-1},
		&x_{i,j}&\longrightarrow (p_i^{2^{j-1}}(x))_{i,j-1}
		&&\text{with weight $0$ for $j\geq 1$}.
		\end{aligned}
	\]
	By induction on $j$, $x_{i,j}$ reaches exactly the $2^j$ edge-vertices on the path from $x$ to $p_i^{2^j}(x)$, each at total cost $1$.

	Now fix $e=(u,v)\notin T_i$ for which $P_{T_i}(e)$ is defined, and let $a=\mathrm{LCA}_i(u,v)$. For each $x\in\{u,v\}$, let $\ell_x=d_i(x)-d_i(a)$. If $\ell_x>0$, set $q_x=\lfloor\log \ell_x\rfloor$ and add the weight-$0$ arcs
	\[
		(e,x_{i,q_x})
		\quad\text{and}\quad
		(e,(p_i^{\ell_x-2^{q_x}}(x))_{i,q_x}),
	\]
	omitting one if they coincide. As in a sparse table, the two length-$2^{q_x}$ blocks cover the entire $x$--$a$ branch because $2^{q_x}\leq\ell_x<2^{q_x+1}$. Thus $e$ needs at most two weight-$0$ arcs for each branch, and at most four to represent all tree exchange arcs from $e$ to $P_{T_i}(e)$.
	\Cref{fig:binary-lifting-compression} illustrates the LCA split and the two overlapping blocks.

	\begin{figure}[H]
		\centering
		\begin{tikzpicture}[
				x=1cm,y=1cm,
				treevertex/.style={circle,fill=black,inner sep=1.35pt},
				treeedge/.style={line width=.8pt,gray!70},
				originaledge/.style={-{Stealth[length=1.8mm]},thick,dashed},
				edgevertex/.style={rectangle,draw,rounded corners=1pt,fill=gray!8,minimum width=6mm,minimum height=4.5mm,inner sep=1pt},
				virtual/.style={rectangle,draw=blue!65!black,rounded corners=1.5pt,fill=blue!6,minimum height=4.7mm,inner sep=1.5pt,align=center},
				tablevertex/.style={virtual,minimum width=8.5mm,minimum height=5.2mm,inner sep=1pt,font=\scriptsize},
				selected/.style={tablevertex,draw=blue!80!black,fill=blue!14,line width=.9pt},
				unused/.style={tablevertex,draw=gray!55,fill=gray!5,text=gray!65},
				zeroarc/.style={-{Stealth[length=1.5mm]},blue!65!black,line width=.55pt},
				unusedarc/.style={-{Stealth[length=1.4mm]},gray!42,line width=.45pt},
				onearc/.style={-{Stealth[length=1.5mm]},ForestGreen,line width=.6pt},
				every node/.style={font=\small}
			]
			\node[treevertex,label=above:{$a=\mathrm{LCA}_i(u,v)$}] (a) at (1.9,2.55) {};
			\node[treevertex] (ul) at (.95,1.45) {};
			\node[treevertex] (vl) at (2.85,1.45) {};
			\node[treevertex,label=below:$u$] (u) at (.15,.35) {};
			\node[treevertex,label=below:$v$] (v) at (3.65,.35) {};
			\draw[treeedge] (a)--(ul)--(u);
			\draw[treeedge] (a)--(vl)--(v);
			\draw[originaledge] (u) to[bend right=28] node[below=2pt] {$e=(u,v)$} (v);
			\node[font=\scriptsize,rotate=49,fill=white,inner sep=.5pt] at (.78,1.38) {$u\leadsto a$};
			\node[font=\scriptsize,rotate=-49,fill=white,inner sep=.5pt] at (3.02,1.38) {$v\leadsto a$};
			\node[font=\footnotesize] at (1.9,-1.32) {(a) LCA decomposition};

			\begin{scope}[xshift=.35cm]
			\node[font=\scriptsize,text=gray!75!black,anchor=east] at (4.42,2.38) {$j=2$};
			\node[font=\scriptsize,text=gray!75!black,anchor=east] at (4.42,1.43) {$j=1$};
			\node[font=\scriptsize,text=gray!75!black,anchor=east] at (4.42,.48) {$j=0$};

			\node[selected] (x02) at (4.85,2.38) {$x_{0,2}$};
			\node[unused]   (x12) at (5.85,2.38) {$x_{1,2}$};
			\node[selected] (x22) at (6.85,2.38) {$x_{2,2}$};
			\node[unused]   (x32) at (7.85,2.38) {$x_{3,2}$};
			\node[unused]   (x42) at (8.85,2.38) {$x_{4,2}$};
			\node[unused]   (x52) at (9.85,2.38) {$x_{5,2}$};

			\node[tablevertex] (x01) at (4.85,1.43) {$x_{0,1}$};
			\node[unused]      (x11) at (5.85,1.43) {$x_{1,1}$};
			\node[tablevertex] (x21) at (6.85,1.43) {$x_{2,1}$};
			\node[unused]      (x31) at (7.85,1.43) {$x_{3,1}$};
			\node[tablevertex] (x41) at (8.85,1.43) {$x_{4,1}$};
			\node[unused]      (x51) at (9.85,1.43) {$x_{5,1}$};

			\foreach \r/\x in {0/4.85,1/5.85,2/6.85,3/7.85,4/8.85,5/9.85}{
				\node[tablevertex] (x\r0) at (\x,.48) {$x_{\r,0}$};
			}

			\node[edgevertex] (e) at (5.85,3.20) {$e$};
			\draw[zeroarc] (e)--node[pos=.48,left,yshift=3pt,inner sep=.5pt] {\scriptsize $0$} (x02);
			\draw[zeroarc] (e)--(x22);

			\draw[unusedarc] (x12)--(x11);
			\draw[unusedarc] (x12)--(x31);
			\draw[unusedarc] (x11)--(x10);
			\draw[unusedarc] (x11)--(x20);
			\draw[unusedarc] (x31)--(x30);
			\draw[unusedarc] (x31)--(x40);
			\draw[unusedarc] (x32)--(x31);
			\draw[unusedarc] (x32)--(x51);
			\draw[unusedarc] (x42)--(x41);
			\draw[unusedarc] (x42)--(10.85,1.43);
			\draw[unusedarc] (x52)--(x51);
			\draw[unusedarc] (x52)--(11.85,1.43);
			\draw[unusedarc] (x51)--(x50);
			\draw[unusedarc] (x51)--(10.85,.48);

			\draw[zeroarc] (x02)--(x01);
			\draw[zeroarc] (x02)--(x21);
			\draw[zeroarc] (x22)--(x21);
			\draw[zeroarc] (x22)--(x41);
			\draw[zeroarc] (x01)--(x00);
			\draw[zeroarc] (x01)--(x10);
			\draw[zeroarc] (x21)--(x20);
			\draw[zeroarc] (x21)--(x30);
			\draw[zeroarc] (x41)--(x40);
			\draw[zeroarc] (x41)--(x50);

			\foreach \j/\x in {0/4.35,1/5.35,2/6.35,3/7.35,4/8.35,5/9.35,6/10.35}{
				\node[treevertex] (p\j) at (\x,-.38) {};
			}
			\draw[treeedge] (p0)--(p1)--(p2)--(p3)--(p4)--(p5)--(p6);
			\foreach \h/\x in {1/4.85,2/5.85,3/6.85,4/7.85,5/8.85,6/9.85}{
				\node[edgevertex] (f\h) at (\x,-.38) {$f_{\h}$};
			}
			\foreach \r/\h in {0/1,1/2,2/3,3/4,4/5,5/6}{
				\draw[onearc] (x\r0)--(f\h);
			}
			\node[font=\scriptsize,ForestGreen,fill=white,inner sep=.5pt] at (4.62,.03) {$1$};
			\node[font=\scriptsize,anchor=north] at (4.35,-.56) {$u=x_0$};
			\node[font=\scriptsize,anchor=north] at (10.35,-.56) {$a=x_6$};
			\draw[gray!65,dashed,-{Stealth[length=1.5mm]}] (p6)--(11.15,-.38);
			\node[font=\tiny,text=gray!75!black,anchor=south] at (10.75,-.33) {rootward};
			\node[font=\scriptsize,text=gray!75!black,anchor=west] at (11.22,-.38) {$\mathrm{root}_i$};
			\node[font=\footnotesize,align=center] at (8.3,-1.32)
			{(b) Auxiliary connections and vertices reachable from $e$\\[-1pt]
				when the $u$--$a$ path has six edges};
			\end{scope}
		\end{tikzpicture}
		\caption{Compression of a tree exchange arc. (a) The path $P_{T_i}(e)$ is split at $a=\mathrm{LCA}_i(u,v)$. (b) For a six-edge $u$--$a$ branch, write $x_r=p_i^r(u)$ and $f_{r+1}=\mathrm{edge}_i(x_r)$. The blocks rooted at $x_{0,2}$ and $x_{2,2}$ cover $f_1{:}f_4$ and $f_3{:}f_6$ and share $x_{2,1}$. Gray nodes are unselected table entries. Blue and green arcs have weights $0$ and $1$, respectively, so $e$ reaches $f_1,\ldots,f_6$.}
		\label{fig:binary-lifting-compression}
	\end{figure}

	For the same-head exchange arcs, create one additional virtual vertex $v'$ for each $v$ and add
	$$
		\begin{cases}
		(e,v') & \text{with weight 0 for each $e\in \rho_T(v)$}              \\
		(v',e) & \text{with weight 1 for each $e\in \rho_{E\setminus T}(v)$}
		\end{cases}
	$$
	These arcs ensure that every $e\in\rho_T(v)$ reaches all edge-vertices in $\rho_{E\setminus T}(v)$ at cost $1$.

	Finally, retain every deficit-source and joining-sink arc with weight $1$. There are $O(nk\log n)$ block vertices and block arcs, at most four connector arcs for each pair $(e,i)$, and $O(m)$ arcs in the same-head and retained gadgets. Hence the claimed size bounds hold.

	Every arc of $D(T)$ is now either retained with weight $1$ or can be represented by a directed segment of total weight $1$ whose internal vertices are virtual. Conversely, every maximal segment of a path in $D'(T)$ whose internal vertices are virtual contracts to exactly one valid arc of $D(T)$. Expanding or contracting these segments proves both path correspondences.
\end{proof}

Since $D'(T)$ is 0/1-weighted, we use 0--1 BFS to find shortest paths.

\begin{lemma}
	A shortest path in a 0/1-weighted digraph can be found in time linear in its number of arcs.
\end{lemma}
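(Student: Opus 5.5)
The plan is the standard 0--1 BFS with a deque, followed by a potential-style argument that each vertex is finalized once. Initialize $\mathrm{dist}(s)=0$, set $\mathrm{dist}(x)=\infty$ for all other vertices, and push $s$ onto a deque. Repeatedly pop a vertex $x$ from the front; if $x$ was already finalized, skip it, and otherwise mark it finalized. For each outgoing arc $(x,y)$ of weight $w\in\{0,1\}$ with $\mathrm{dist}(x)+w<\mathrm{dist}(y)$, set $\mathrm{dist}(y)\gets\mathrm{dist}(x)+w$ and record $x$ as the parent of $y$. Then push $y$ to the front if $w=0$ and to the back if $w=1$. The shortest path itself is read off the parent pointers, exactly as in the BFS of \cref{alg:brute_reach}.

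The correctness argument runs in three steps.

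\begin{enumerate}
\item Establish the invariant that the deque, read front to back, holds labels that are nondecreasing and take at most two consecutive values $d,d+1$. This holds initially, and it is preserved because a $0$-arc pushes label $d$ to the front while a $1$-arc pushes label $d+1$ to the back, where $d$ is the label of the popped front element.
\item Conclude, as in Dijkstra's algorithm, that vertices are finalized in nondecreasing order of label.
\item Show that when a vertex is finalized its label equals its true distance. Take the first vertex on a shortest path whose label is too large. Its predecessor on that path was finalized earlier with the correct label, so relaxing the arc between them would already have set the correct label, a contradiction. This is the one step needing care, because a vertex may sit in the deque several times with stale labels; skipping already-finalized vertices handles this.
\end{enumerate}

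For the running time, each vertex is finalized once, so each arc is relaxed once, and each relaxation causes at most one push. The total number of deque operations is therefore $O(|V|+|A|)$. Restricting attention to the part reachable from $s$ gives time linear in the number of arcs, which is the bound in the statement.
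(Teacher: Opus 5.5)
Your proposal is correct and is the same deque-based 0--1 BFS that the paper's proof sketch describes: relaxations along weight-$0$ arcs push to the front and relaxations along weight-$1$ arcs push to the back. You additionally supply the sorted-deque invariant, the Dijkstra-style finalization argument and the operation count that the paper leaves implicit; the one thing to tighten is that your first-step invariant should refer to the keys recorded at push time, not current labels, since a vertex's label can drop while a stale copy still sits at the back of the deque.
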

\begin{proof}[Proof Sketch]
	Use a deque to store the explored vertices. When traversing a weight-$0$ arc, add the new vertex to the front; when traversing a weight-$1$ arc, add it to the back.
\end{proof}

Finally, the digraph $D'(T)$ can be built and updated efficiently as follows.

\begin{lemma}
	The digraph $D'(T)$ can be built in $O(nk\log n+mk)$ time and updated in $O(k)$ time after each edge insertion.
\end{lemma}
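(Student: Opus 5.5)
The build has two parts: per-tree preprocessing, then per-edge connector arcs. For each $i\in[k]$ we root $T_i$ (each $T_k$-component at its own root) and compute parents, depths and $\mathrm{edge}_i(\cdot)$ by one traversal in $O(n)$ time. We then fill a binary-lifting table $p_i^{2^j}(x)$ for all $x$ and $j\le\log n$ in $O(n\log n)$ time. From the same table we create the block vertices $x_{i,j}$ and their two outgoing arcs. This gives $O(nk\log n)$ time for all trees.

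Next, for every edge $e=(u,v)$ and every $i$ with $e\notin T_i$ and $P_{T_i}(e)$ defined, we need $a=\mathrm{LCA}_i(u,v)$ together with the ancestors $p_i^{\ell_x-2^{q_x}}(x)$. Binary lifting would cost $O(\log n)$ per query, which is $O(mk\log n)$ in total and too slow. Instead we use an $O(n)$-preprocessing, $O(1)$-query LCA structure per tree (Euler tour plus RMQ), and an $O(1)$ level-ancestor structure (ladder decomposition, or a jump-pointer/ladder combination, built in $O(n\log n)$, which fits the budget). Each pair $(e,i)$ then costs $O(1)$ and adds at most four arcs. The same-head gadget and the retained deficit-source and joining-sink arcs take $O(m)$ in total, since each edge only requires knowing whether it lies in $T$, its head, and whether it is $T_k$-joining. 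These facts are available from the index function and a component label per vertex. The total is $O(nk\log n+mk)$.

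For an update inserting $e=(u,v)$, the realization $T$ is unchanged, so no tree, depth table or block vertex changes; only arcs incident to the new edge-vertex $e$ are added. If $e$ is not $T_k$-joining, then for each $i\in[k]$ we perform the $O(1)$ LCA and level-ancestor queries and add at most four connector arcs. This is $O(k)$ in total, and we skip $T_k$ when $u,v$ lie in different $T_k$-components. We also add the arc $(v',e)$ with weight $1$, since $e\in\rho_{E\setminus T}(v)$. If $v$ is deficient we add the deficit-source arc $(v,e)$, and if $e$ is $T_k$-joining we add $(e,t)$. Each of these is an $O(1)$ check, so an update costs $O(k)$. By the preceding lemma, the resulting $D'(T)$ coincides with the one built from scratch on the new graph.

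The main obstacle is getting $O(1)$ per (edge, tree) pair rather than $O(\log n)$. The LCA part is standard. For the level-ancestor queries I would first try to reuse the already-built sparse table: the node $p_i^{\ell_x-2^{q_x}}(x)$ is the ancestor of $a$ at depth $d_i(a)+2^{q_x}$, so it could be found by an $O(1)$ level-ancestor structure such as the ladder algorithm. Failing that, an alternative is a DFS that answers all queries for tree $i$ offline, keeping the current root path in an array indexed by depth. This also gives $O(1)$ per query in the static build, but it is harder to use for online updates, so the ladder structure is preferable.
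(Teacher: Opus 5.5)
Your proposal is correct and takes the same route as the paper: build the block vertices and their arcs in $O(nk\log n)$, use an $O(1)$-query LCA structure per tree so each pair $(e,i)$ costs $O(1)$, and note that an insertion leaves $T$ unchanged, so only the $O(k)$ arcs incident to the new edge-vertex are added. Your explicit $O(1)$ level-ancestor structure (jump pointers plus ladders, $O(n\log n)$ per tree) is in fact needed to find $p_i^{\ell_x-2^{q_x}}(x)$ in constant time, which the paper's sketch glosses over by mentioning only LCA queries and depths; one small slip is that this vertex is the ancestor of $x$ at depth $d_i(a)+2^{q_x}$, a descendant of $a$ rather than an ancestor of $a$.
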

\begin{proof}[Proof Sketch]
	First build the arcs between virtual vertices in $O(nk\log n)$ time.

	Next initialize, for each tree $T_i$ with $i<k$ and for all $T_k$-components, a data structure with linear initialization time and $O(1)$-time LCA queries. This takes $O(nk)$ time. For each original or inserted edge, spend $O(k)$ time querying the LCA and endpoint depths in every $T_i$ and adding the arcs to the appropriate virtual vertices. This takes $O(mk)$ total time.
\end{proof}

Therefore, replacing $D(T)$ by $D'(T)$ in \cref{alg:brute_reach} and \cref{alg:brute_resume_reach} reduces the brute-force search time to $O(nk\log n+mk)$. Plugging this bound into \cref{alg:round_robin} gives $O(mk^2\log^2 n)$ total update time.

\section{Data Structure for Scanning $P_{T_i}(e)$}
\label{sec:scan_C}

We describe the data structure used to efficiently scan through the unexplored edges in $P_{T_i}(e)$ in \cref{line:scan_C} of \cref{alg:reach}. By invariant~(1) in \cref{lem:cyc_scan_invariant}, $L_i$ is a connected subtree of $T_i$, and $e$ has exactly one endpoint in $L_i$.

\begin{theorem}[Data Structure for Scanning $P_{T_i}(e)$ \cite{GABOW1995259}]
	\label{thm:scan_C}
	Let $T$ be an undirected tree. There exists a data structure that takes $O(|V(T)|)$ space and maintains a connected subtree $L \subseteq V(T)$, where $T[L]$ contains no $T_k$-joining edge, that supports:
	\begin{itemize}
		\item $\textsc{Init}()$: Choose an arbitrary vertex $r$ in $T$ as root, and compute the depth and parent of each vertex in $O(|V(T)|)$ time.
		\item $\textsc{Reset}(s)$: Reset $L \gets \{s\}$ in $O(|L_{old}|)$ time.
		\item $\textsc{Query}(u, v)$: Provided that $|\{u,v\} \cap L|=1$, let $R:=V(P_T(u,v))\setminus L$.
		      If $P_T(u,v)$ contains no $T_k$-joining edge, the data structure updates $L\gets L\cup V(P_T(u,v))$ and returns the vertices in $R$ ordered by their distance from $u$. This operation runs in $O(|R|)$ time.

		      If $P_T(u,v)$ contains a $T_k$-joining edge, the procedure terminates and reports such an edge immediately upon encountering it. Its running time is linear in the number of edges traversed before the edge is encountered.
	\end{itemize}
\end{theorem}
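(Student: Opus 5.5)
We would realize the data structure with a single array of per-vertex marks plus the parent and depth arrays computed by \textsc{Init}. \textsc{Init} is one DFS/BFS from an arbitrary root $r$: it records $\mathrm{parent}(x)$ and $\mathrm{depth}(x)$ for every vertex in $O(|V(T)|)$ time and clears all marks. Membership in $L$ is a boolean flag $\mathrm{in}_L(x)$. In addition we keep a list of the vertices currently in $L$. \textsc{Reset}$(s)$ walks this list, clears every flag on it, empties it, and then sets $\mathrm{in}_L(s)$. This costs $O(|L_{old}|)$ and never touches the rest of the tree, so no global reinitialization is needed.

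For \textsc{Query}$(u,v)$, assume without loss of generality that $u\in L$ and $v\notin L$; otherwise swap them and reverse the output order at the end. The key structural fact is that $L$ is a connected subtree. Hence $P_T(u,v)\cap L$ is a prefix of the path starting at $u$. Indeed, if $x,y\in L$ lie on the path, then the whole $T$-path between them lies in $L$, and that path is a subpath of $P_T(u,v)$. So $R$ is exactly the suffix of $P_T(u,v)$ after the last vertex $z$ of $L$, and $z$ is adjacent to the first vertex of $R$.

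To find this suffix in $O(|R|)$ time without computing the LCA of $u$ and $v$ over the whole path, we walk from $v$ toward $L$. Let $c$ be the vertex of $L$ of minimum depth. It is unique because $L$ is connected, and we maintain it under updates. If $v$ is not in the subtree of $c$, the path from $v$ must enter $L$ through $c$ from $c$'s parent side. Otherwise it enters from below, through some $L$-vertex that is an ancestor of $v$. In either case we climb parent pointers from $v$, collecting vertices, until we hit a vertex with $\mathrm{in}_L$ set.

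The main obstacle is the first case, where the entry point lies in a different branch and climbing from $v$ alone may pass far above $c$. To handle it, we climb alternately from $v$ and from $c$ toward the root, marking visited vertices with a timestamp. We stop at the first vertex visited by both walks, or the first $L$-vertex reached from $v$. Each step of the $c$-walk lands on a vertex outside $L$ that lies on $P_T(c,v)$, and that path is contained in $R\cup\{c\}$ in this case. The alternation therefore keeps the total work within twice the number of steps on the $v$-side plus $|R|$, giving $O(|R|)$ overall.

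Once the two walks meet, or the $v$-walk reaches $L$, we output $R$ in order of distance from $u$. That order is: the $c$-side climb in forward order, then the $v$-side climb reversed. We then set the flags, append the new vertices to the list, and update $c$ to the shallowest new vertex if it is shallower.

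For the joining-edge clause, we check each traversed tree edge for being $T_k$-joining as we step, and abort immediately when we find one. The cost is then linear in the number of edges traversed. This is sound because the search only ever needs to report some joining edge, and the caller in \cref{alg:reach} returns at once. Since $L$ is only extended on a successful query, $T[L]$ never contains a joining edge, which preserves the hypothesis. Correctness of the prefix claim rests on the connectivity of $L$, which we get from invariant (1) of \cref{lem:cyc_scan_invariant}.
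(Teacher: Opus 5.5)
Your setup is sound: the flags and list for \textsc{Reset}, the prefix structure of $P_T(u,v)\cap L$, the shallowest vertex $c$, and the two-case analysis all work. The gap is in the step you call the main obstacle. Your climbs from $v$ and from $c$ alternate without looking at depth, so they leave $P_T(u,v)$.
\begin{itemize}
\item When $v$ is a descendant of $c$, every step of the $c$-walk goes to a proper ancestor of $c$, which is off the path.
\item Otherwise, put $w=\mathrm{LCA}(c,v)$. Whichever walk reaches $w$ first keeps climbing above $w$ until the other arrives, for about $|d(c)-d(v)|$ steps.
\end{itemize}
So your claim that each step of the $c$-walk lands on $P_T(c,v)$ is false.

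The running time survives, since the overshoot is bounded by the other walk. The joining-edge clause does not, because you abort on \emph{every} traversed joining edge. Take $L=\{s\}$, so $c=s$, and let $v$ be a grandchild of $s$ whose two connecting tree edges lie inside the $T_k$-component. Suppose the parent edge of $s$ is $T_k$-joining, which is typical when the parent of $s$ in $T_i$ lies outside the component. The $c$-walk crosses that edge in its first step and you abort, although $P_T(s,v)$ has no joining edge. You then report an edge $g\notin P_{T_i}(e)$. \cref{alg:reach} would call \expl$(g,e)$ along a nonexistent arc $(e,g)$ and return an invalid augmenting path.

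Your output rule is also wrong as stated. When the $v$-walk hits $L$, the $c$-side list must be discarded. After a meeting, the overshooting list must be truncated at $w$.

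The missing idea, which is exactly the paper's argument, is to let depth drive the pointers instead of alternation. Always advance the deeper of the two pointers, with ties broken arbitrarily. Stop when the $v$-pointer's next step would enter $L$, when the next edge is joining, or when the pointers meet.
\begin{itemize}
\item If $v$ lies below $c$, the $v$-pointer stays strictly deeper than $c$ until its next step enters $L$, because the ancestor of $v$ at depth $d(c)$ is $c\in L$. Hence the $c$-pointer never moves.
\item Otherwise, the pointers meet exactly at $w$ after $|R|$ steps, and neither ever goes above $w$.
\end{itemize}
Every traversed edge then lies on $P_T(u,v)$. So aborting at the first joining edge is sound and costs time linear in the edges traversed. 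The output is exactly the $c$-side list in forward order followed by the reversed $v$-side list, and no timestamps are needed.
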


\begin{remark}
If an additional $O(\log n)$ factor is acceptable, these operations can alternatively be implemented with a standard heavy--light decomposition: on each heavy chain, maintain the contiguous segment of explored edges and scan only the uncovered portions of the query path. A standard link--cut tree can also be used. Assuming $u\in L$, $\textsc{MakeRoot}(u)$ followed by $\textsc{Access}(v)$ exposes $P_T(u,v)$ as an auxiliary splay tree; we then traverse its suffix outside $L$, splaying each visited vertex.
\end{remark}

\begin{proof}

	During initialization, the algorithm roots $T$ at $r$ and computes each vertex's depth and parent.

	The algorithm explicitly maintains $r'\in L$ as the shallowest vertex in $L$. On reset, it sets $L\gets\{s\}$ and $r'\gets s$.

	For a query $(u,v)$, without loss of generality, assume $u\in L$. Since $L$ is connected, the path from $u$ to $r'$ is contained in $L$. The algorithm starts with two pointers at $r'$ and $v$, and repeatedly moves the deeper pointer to its parent. It stops when the pointer from $v$ is about to enter $L$, when the next edge is a $T_k$-joining edge, or when the two pointers meet.

	If a $T_k$-joining edge is encountered, the algorithm reports it immediately. Otherwise, it records the vertices traversed by the two pointers, reverses the order of the vertices recorded from the $v$-side, concatenates the two lists, and returns them. The vertex $r'$ is unchanged if the pointer from $v$ enters $L$, and is otherwise updated to the meeting vertex.

	Since every traversed vertex is either returned and added to $L$, or precedes the reported joining edge, the running time is linear in the number of traversed vertices.
\end{proof}

Thus, it suffices to maintain the above data structure for each $T_i (i \in [k-1])$, and for each connected $T_k$-component in $T_k$, so that we can answer the query on \cref{line:scan_C} of \cref{alg:reach} by calling $\textsc{Query}(u,v),$ where $e=(u,v)$.

\end{document}